\renewcommand\section{\@startsection {section}{1}{\z@}%
	{-3.5ex \@plus -1ex \@minus -.2ex}%
	{2.3ex \@plus.2ex}%
	{\sffamily\Large\bfseries\raggedright}}
\renewcommand\subsection{\@startsection{subsection}{2}{\z@}%
	{-3.25ex\@plus -1ex \@minus -.2ex}%
	{1.5ex \@plus .2ex}%
	{\sffamily\Large\bfseries\raggedright}}
\renewcommand\subsubsection{\@startsection{subsubsection}{3}{\z@}%
	{-3.25ex\@plus -1ex \@minus -.2ex}%
	{1.5ex \@plus .2ex}%
	{\sffamily\Large\bfseries\raggedright}}
\renewcommand\paragraph{\@startsection{paragraph}{4}{\z@}%
	{-3.25ex \@plus-1ex \@minus-.2ex}%
	{1.5ex \@plus .2ex}%
	{\sffamily\large\bfseries\raggedright}}
\renewcommand\subparagraph{\@startsection{subparagraph}{5}{\z@}%
	{3.25ex \@plus1ex \@minus .2ex}%
	{-1em}%
	{\sffamily\normalsize\bfseries}}
\DeclareFontShape{T1}{lmss}{bx}{sc} { <-> ssub * phv/bx/sc }{}
\DeclareFontShape{T1}{lmss}{m}{sc} { <-> ssub * phv/m/sc }{}
\g@addto@macro\bfseries{\boldmath}
\theoremstyle{plain}
\newtheorem{theorem}{Theorem}
\newtheorem{lemma}[theorem]{Lemma}
\newtheorem{corollary}[theorem]{Corollary}
\newtheorem{definition}[theorem]{Definition}
\newtheorem{theoremduplicate}{Theorem}
\newtheorem{lemmaduplicate}{Lemma}
\algrenewcommand\algorithmicindent{2.4em}
\renewcommand{\setminus}{-}
\newcommand{\add}{\cup}
\newcommand{\e}{\vspace*{-1em}}
\newcommand{\clause}{C}
\newcommand{\formula}{\Phi}
\newcommand{\N}{\ensuremath{\mathds{N}}}
\newcommand{\CNF}{\ensuremath{\textsc{CNF}}\xspace}
\newcommand{\ThreeCNF}{\ensuremath{\textsc{3}\textsc{-}\textsc{CNF}}\xspace}
\newcommand{\EThreeCNF}{\ensuremath{\textsc{3}\textsc{-}\textsc{CNF}}\xspace}
\newcommand{\TwoOrThreeCNF}{\ensuremath{\textsc{2or3}\textsc{-}\textsc{CNF}}\xspace}
\newcommand{\ThreeOccTwoOrThreeCNF}{\ensuremath{\textsc{3Occurrences}\textsc{-}\textsc{2or3}\textsc{-}\textsc{CNF}}\xspace}
\newcommand{\Sat}{\ensuremath{\textsc{Sat}}\xspace}
\newcommand{\ThreeSat}{\ensuremath{\textsc{3}\textsc{-}\textsc{Sat}}\xspace}
\newcommand{\EThreeSat}{\ensuremath{\textsc{3}\textsc{-}\textsc{Sat}}\xspace}
\newcommand{\ThreeUnSat}{\ensuremath{3\textsc{-}\textsc{UnSat}}\xspace}
\newcommand{\MinimalUnSat}{\ensuremath{\textsc{Minimal}\textsc{-}\allowbreak{}\textsc{Un}\-\textsc{Sat}}\xspace} 
\newcommand{\MinimalThreeUnSat}{\ensuremath{\textsc{Minimal}\textsc{-}\allowbreak{}3\textsc{-}\textsc{UnSat}}\xspace} 
\newcommand{\DecVC}{\ensuremath{\textsc{Vertex}\-\textsc{Cover}}\xspace}
\newcommand{\ThreeCol}{\ensuremath{\textsc{3}\textsc{-}\textsc{Color}\-\textsc{ability}}\xspace}
\newcommand{\DecCol}{\ensuremath{\textsc{Color}\-\textsc{ability}}\xspace}
\newcommand{\MinimalNonVC}{\ensuremath{\textsc{Minimal}\textsc{-}\allowbreak{}k\textsc{-}\textsc{No}\-\textsc{Vertex}\-\textsc{Cover}}\xspace}
\newcommand{\VertexMinimalThreeUnCol}{\ensuremath{\textsc{Vertex}\-\textsc{Minimal}\textsc{-}\allowbreak{}3\textsc{-}\textsc{Un}\-\textsc{Color}\-\textsc{ability}}\xspace} 
\newcommand{\EdgeMinimalThreeUnCol}{\ensuremath{\textsc{Minimal}\textsc{-}\allowbreak{}3\textsc{-}\textsc{Un}\-\textsc{Color}\-\textsc{ability}}\xspace} 
\newcommand{\VertexMinimalKayUnCol}{\ensuremath{\textsc{Vertex}\-\textsc{Minimal}\textsc{-}\allowbreak{}k\textsc{-}\textsc{Un}\-\textsc{Color}\-\textsc{ability}}\xspace} 
\newcommand{\EdgeMinimalKayUnCol}{\ensuremath{\textsc{}\-\textsc{Minimal}\textsc{-}\allowbreak{}k\textsc{-}\textsc{Un}\-\textsc{Color}\-\textsc{ability}}\xspace}
\newcommand{\OptProb}{\ensuremath{\textsc{Opt}\-\textsc{Prob}}\xspace} 
\newcommand{\opt}{\mbox{\textnormal{opt}}}
\newcommand{\sat}{\ensuremath{s}}
\newcommand{\C}{\ensuremath{\textnormal{C}}\xspace}
\newcommand{\F}{\ensuremath{\textnormal{F}}\xspace}
\newcommand{\T}{\ensuremath{\textnormal{T}}\xspace}
\renewcommand{\P}{\ensuremath{\textnormal{P}}\xspace}
\newcommand{\NP}{\ensuremath{\textnormal{NP}}\xspace}
\newcommand{\coNP}{\ensuremath{\textnormal{coNP}}\xspace}
\newcommand{\DP}{\ensuremath{\textnormal{DP}}\xspace}
\newcommand{\ihat}{{\hat\imath}}
\newcommand{\jhat}{{\hat\jmath}}
\newcommand{\alg}{\ensuremath{A}\xspace}
\crefname{section}{Section}{Sections}
\crefname{subsection}{Subsection}{Subsections}
\crefname{lemma}{Lemma}{Lemmas}
\crefname{figure}{Figure}{Figures}
\crefname{table}{Table}{Tables}
\crefname{theorem}{Theorem}{Theorems}
\crefname{definition}{Definition}{Definitions}
\crefname{corollary}{Corollary}{Corollaries}
\crefname{equation}{Equation}{Equations}
\crefname{algorithm}{Algorithm}{Algorithms}
\crefname{appendix}{Appendix}{Appendices}
\let\leftold\left
\let\rightold\right
\renewcommand{\left}{\mathopen{}\mathclose\bgroup\leftold}
\renewcommand{\right}{\aftergroup\egroup\rightold}
\title{\bfseries Finding Optimal Solutions With Neighborly Help}
\author{
  Elisabet Burjons\\
  \small Department of Computer Science, ETH Zürich\\
  \small \url{eburjons@inf.ethz.ch}\\[2mm]
  Fabian Frei\\
  \small Department of Computer Science, ETH Zürich\\
  \small \url{fabian.frei@inf.ethz.ch}\\[2mm]
  Edith Hemaspaandra\\
  \small Department of Computer Science, Rochester Institute of Technology\\
  \small \url{eh@cs.rit.edu}\\[2mm]
  Dennis Komm \\
  \small Department of Computer Science, ETH Zürich\\
  \small \url{dennis.komm@inf.ethz.ch}\\[2mm]
  David Wehner \\
  \small Department of Computer Science, ETH Zürich\\
  \small \url{david.wehner@inf.ethz.ch}}
\date{}
\begin{document}

\maketitle

\begin{quote}\small
  \textbf{Abstract.}
  Can we efficiently compute optimal solutions to instances of a hard problem from optimal
  solutions to neighboring (i.e., locally modified) instances? 
  For example, can we efficiently compute an optimal coloring for a graph 
  from optimal colorings for all one-edge-deleted subgraphs?
  Studying such questions not only gives detailed insight into the structure of
  the problem itself, but also into the complexity of related problems;
  most notably graph theory's core notion of critical graphs 
  (e.g., graphs whose chromatic number decreases under deletion of an arbitrary edge)
  and the complexity-theoretic notion of minimality problems  
  (also called criticality problems, e.g., recognizing graphs that become $3$-colorable when an arbitrary edge is deleted).
  
  We focus on two prototypical graph problems, 
  Colorability and Vertex Cover. 
  For example, we show that it is \NP-hard 
  to compute an optimal coloring for a graph from optimal colorings 
  for \emph{all} its one-vertex-deleted subgraphs, 
  and that this remains true even when  optimal solutions for \emph{all} one-edge-deleted subgraphs are given. 
  In contrast, computing an optimal coloring from all (or even just two) one-edge-added
  supergraphs is in \P. 
  We observe that Vertex Cover exhibits a remarkably different behavior, demonstrating the power of our model to delineate problems from each other more precisely on a structural level.
  
  Moreover, we provide a number of new complexity results for minimality and criticality problems. For example, we prove that \EdgeMinimalThreeUnCol 
  is complete for \DP (differences of \NP sets),
  which was previously known only for the more amenable 
  case of deleting vertices rather than edges.
  For Vertex Cover, we show that recognizing $\beta$-vertex-critical graphs 
  is complete for $\Theta_2^\textnormal{p}$
  (parallel access to \NP), obtaining the first completeness result 
  for a criticality problem for this class. 
\end{quote}

\begin{quote}\small
 \textbf{Keywords.}
 Critical Graphs, Computational Complexity, Structural Self-Reducibility, Minimality Problems,
 Colorability, Vertex Cover, Satisfiability, Reoptimization, Advice
\end{quote}

\begin{quote}\small
	\textbf{Funding.}
	\emph{Edith Hemaspaandra:} Research done in part while on sabbatical at ETH Zürich.
\end{quote}
\newpage
\section{Introduction and Related Work}\label{sec:introduction}

In \cref{sec:model}, we introduce and motivate our new model, which we then compare and contrast to related notions in \cref{sec:related-concepts}. Finally, we present in \cref{sec:results} an overview of our most interesting results and place them into the context of the wider literature. 

\subsection{Our Model}\label{sec:model}

In view of the almost complete absence of progress in the question of $\P$ versus $\NP$,  
it is natural to wonder just how far and in what way these sets may differ. 
For example, how much additional information enables us to design an algorithm that solves an otherwise \NP-hard problem in polynomial time? 
We are specifically interested in the case where this additional information takes the form of 
optimal solutions to neighboring (i.e., locally modified) instances. 
This models situations such as that of 
a newcomer 
who 
may ask 
experienced 
peers for advice on how to solve a difficult problem, 
for instance finding an optimal work route. 
Similar circumstances arise when new servers join a computer network. 
Formally, we consider the following oracle model:
An algorithm may, on any given input, repeatedly select an arbitrary  instance neighboring the given one 
and query the oracle for an optimal solution to it.
Occasionally, it will be interesting to limit the number of queries that we grant the algorithm. In general, we do not impose such a restriction, however. 

What precisely constitutes a local modification and thus a neighbor depends on the specific problem, of course. 
We examine the prototypical graph problems Colorability and Vertex Cover, considering the following four local modifications, which are arguably the most natural choices: 
deleting an edge,
adding an edge,
deleting a vertex (including adjacent edges), and 
adding a vertex (including an arbitrary, possibly empty, set of edges from the added vertex to the existing ones). 
For example, we ask whether there is a polynomial-time algorithm that computes a minimum vertex cover for an input graph $G$ if it has access to minimum vertex covers for all one-edge-deleted subgraphs of $G$. 
We will show that questions of this sort are closely connected to and yet clearly distinct from 
research in other areas, in particular the study of critical graphs, minimality problems, self-reducibility, and reoptimization. 

\subsection{Related Concepts}\label{sec:related-concepts} 
\subparagraph*{Criticality.}
The notion of criticality was introduced into the field of graph theory by Dirac~\cite{dir:theorems-abstract-graphs} 
in 1952
in the context of Colorability with respect to vertex deletion.
Thirty years later, Wessel~\cite{wes:criticality-graph-theory} 
generalized the concept to arbitrary graph properties 
and modification operations. 
Nevertheless, Colorability has remained a central focus of the extensive research on critical graphs.
Indeed, a graph $G$ is called \emph{critical} without any further specification if it is $\chi$-critical under edge deletion, that is, if its chromatic number $\chi(G)$ 
(the number of colors used in 
an optimal coloring of $G$) 
changes when an arbitrary edge is deleted.
Besides Colorability, one other problem has received a comparable amount of attention and thorough analysis in 
three different manifestations: Independent Set, Vertex Cover, and Clique.
The corresponding notions are $\alpha$-criticality, $\beta$-criticality, and $\omega$-criticality, 
where $\alpha$ is the \emph{independence number} (size of a maximum independent set), 
$\beta$ is the \emph{vertex cover number} (size of a minimum vertex cover), and 
$\omega$ is the \emph{clique number} (size of a maximum clique). 
Note that these graph numbers are all monotone---either nondecreasing or nonincreasing---with respect to each of the local modifications examined in this paper.\e

\subparagraph*{Minimality.}
Another strongly related notion is that of minimality problems. 
An instance is called \emph{minimal} with respect to a property if 
only the instance itself but none of its neighbors has this property; 
that is, it inevitably loses the property under the considered local modification.
The corresponding minimality problem is to decide whether an instance is minimal in the described sense. 
For example, a graph $G$ is minimally 3-uncolorable (with respect to edge deletion) if it is not 3-colorable, yet all its one-edge-deleted neighbors are. The minimality problem \EdgeMinimalThreeUnCol is the set of all minimally 3-uncolorable graphs. 
Note that a graph is critical exactly if it is minimally $k$-uncolorable for some $k$. 

While minimality problems tend to be in \DP 
(i.e., differences of two \NP sets, the second level of the Boolean hierarchy),
\DP-hardness is so difficult to prove for them that only a few have been shown 
to be \DP-complete so far; see for instance Papadimitriou and Wolfe~\cite{pap-wol:j:facets}.
Note that the notion of minimality is not restricted to graph problems. Indeed, minimally unsatisfiable formulas
figure prominently
in many of our proofs.\e

\subparagraph*{Auto-Reducibility.}
Our model provides a refinement of the notion of functional 
auto-reducibility; see Faliszewski and Ogihara~\cite{fal-ogi:j:func-autored}.
An algorithm solves a function problem $R\subseteq\Sigma_1^\ast\times\Sigma_2^\ast$ 
if on input $x\in\Sigma_1^\ast$ it outputs some  $y\in\Sigma_2^\ast$ with $(x,y)\in R$. 
The problem $R$ is \emph{auto-reducible} if there is a polynomial-time algorithm 
with unrestricted access to an oracle that provides solutions to all instances except $x$ itself.
The task of finding an optimal solution to a given instance 
is a special kind of function problem. 
Defining all instances to be neighbors (local modifications) of each other 
lets the two concepts coincide.\e

\subparagraph*{Self-Reducibility.}
Self-reducibility is auto-reducibility with the additional restriction 
that the algorithm may query the oracle only on instances that are smaller in a certain way. 
There are a multitude of definitions of self-reducibility that differ in what exactly is considered to be ``smaller,'' 
the two seminal ones stemming from Schnorr~\cite{sch:c:self-reducible}
and from Meyer and Paterson~\cite{mey-pat:t:int}. 
For Schnorr, an instance is smaller than another one if its encoding input string is strictly shorter. 
While his definition does allow for functional problems (i.e., more than mere decision problems, in particular the problem of finding an optimal solution), 
it is too restrictive for self-reducibility to encompass our model since not all neighboring graphs have shorter strings under natural encodings.
Meyer and Paterson are less rigid and allow instead any partial order having short downward chains to determine which instances are considered smaller than the given one.\footnote{Formally, a partial order is said to \emph{have short downward chains} if the following condition is satisfied: There is a polynomial $p$ such that every chain decreasing 
with respect to the considered partial order and starting with some string $x$ is shorter than 
$p(|x|)$ and such that all strings 
preceding $x$ in that order are bounded in length by $p(|x|)$.}
The partial orders induced by deleting vertices, by deleting edges, and by adding edges all have short downward chains. The definition by Meyer and Paterson is thus sufficient for 
our model to become part of functional self-reducibility 
for all local modifications considered in this paper but one, namely, the case of adding a vertex, which is too generous a modification to display any particularly interesting behavior.

As an example, consider the graph decision problem
$\DecCol=\{(G,k)\mid\chi(G)\le k\}$, which is self-reducible 
by the following
observation. 
Any graph $G$ with at least two vertices that is not a clique is $k$-colorable
exactly if at least one of the polynomially many graphs that result from 
merging two non-adjacent vertices in $G$ is $k$-colorable. 
This works for the optimization variant of the problem as well. 
Any optimal coloring of $G$ assigns at least two vertices the same
color, 
except in the trivial case of $G$ being a clique. An optimal coloring for the graph 
that has 
two such vertices merged then  
yields an optimal coloring for $G$. 
This 
contrasts well with 
the findings 
for Colorability's behavior under our new model discussed 
below.\e

\subparagraph*{Reoptimization.}
Reoptimization examines optimization problems under a model that is tightly connected to ours.
The notion of reoptimization was coined by Schäffter \cite{sch:forbidden} 
and first applied by Archetti et al.~\cite{arc-ber-spe:reoptimizing-tsp}.
The reoptimization model sets the following task for an optimization problem:
\begin{quote}
\hspace*{-0.4ex}\emph{Given an instance, an optimal solution to it, and a local modification of this instance, compute an optimal solution to the modified instance.}
\end{quote}
The proximity to our model becomes clearer after a change of perspective. 
We reformulate the reoptimization task by 
reversing the roles of the given and the modified instance.
\begin{quote}
\hspace*{-0.4ex}\emph{Given an instance, a local modification of it, and an optimal solution to the modified instance, compute an optimal solution to the original instance.}
\end{quote}
Note that this perspective switch flips the definition of local modification; for example, edge deletion turns into edge addition.
Aside from this, the task now reads almost identical to that demanded in our model. 
The sole but crucial difference is that in reoptimization, the neighboring instance and the optimal solution to it are given as part of the input, whereas in our model, the algorithm may select any number of neighboring instances and query the oracle for optimal solutions to them.
Even if we limit the number of queries to just one, our model is still more generous since the algorithm is choosing (instead of being given) the neighboring instance to which the oracle will supply an optimal solution.
Thus, hardness in our model always implies hardness for reoptimization, but not vice versa.
In fact, all problems examined under the reoptimization model so far remain NP-hard. Only for some of them could an improvement of the approximation ratio be achieved after extensive studies, the first discovered examples being \textsc{tsp} under edge-weight changes \cite{boe-for-etal:reusing-optimal-solutions} and addition or deletion of vertices \cite{aus-esc-etal:reoptimization-tsp}. This stands in stark contrast to the results for our model, as outlined in the next section.

\subsection{Results}\label{sec:results}

We shed a new light on two of the most prominent and well-examined graph problems, 
Colorability and Vertex Cover.

Our results come in two different types. 

The first type concerns the hardness of the two problems 
in our model for the most common local modifications; 
\cref{tab:resultoverview} summarizes the main results of this type.
\begin{table}
  \begin{center}
    \caption{An overview of our results regarding the hardness of Colorability and Vertex Cover in our model for the most common definitions of a local modification. The $v$ stands for a vertex and the $e$ stands for an edge. The question mark indicates an interesting open problem. The results in the vertex-addition columns are trivial; see \cref{thm:addvertextheorem} in \cref{app:addvertextheorem}. The NP-hardness results for the 1-query case all follow from rather simple Turing reductions; see \cref{thm:onequerylemma} in \cref{app:onequerylemma}.}
    \label{tab:resultoverview}
    \setlength{\tabcolsep}{3pt}\footnotesize
    \begin{tabular}{ccccccccc}
      \toprule
      \multirow{2}{*}[-1ex]
      {\shortstack[1]{No.\ of\\Queries}} & \multicolumn{4}{c}{Colorability} 
      & \multicolumn{4}{c}{Vertex Cover} \\ \cmidrule(lr){2-5}\cmidrule(lr){6-9} 
      & Add $v$ 
      & Delete $v$ 
      &Add $e$
      & Delete $e$ 
      & Add $v$ 
      & Delete $v$ 
      &Add $e$
      & Delete $e$ \\ \midrule
      1  
      & $\P$
      & $\NP\text{-hard}$
      & $\NP\text{-hard}$
      & $\NP\text{-hard}$   
      & $\P$
      & $\NP\text{-hard}$
      & $\NP\text{-hard}$
      & $\NP\text{-hard}$\\ \midrule
      \multirow{2}{*}[0ex]{\shortstack[1]{2 or\\more}}    
      &$\P$
      & $\NP\text{-hard}$
      & $\P$
      &$\NP\text{-hard}$
      & $\P$
      & $\P$
      & $\P$
      &?\\
      &[Thm.~\ref{thm:addvertextheorem}]
      &[Thm.~\ref{thm:color-v}]
      &[Thm.~\ref{thm:colorer}]
      &[Thm.~\ref{thm:color-e}]
      &[Thm.~\ref{thm:addvertextheorem}]
      &[Thm.~\ref{thm:removevPoly}]
      &[Thm.~\ref{thm:addePoly}]
      &\\\bottomrule
    \end{tabular}
  \end{center}
\end{table}
In addition, \cref{cor:SATisNPhardInOurModel,cor:removeTrianglePoly} 
show that Satisfiability and Vertex Cover remain \NP-hard for any number of queries if the local modification is the deletion of a clause or a triangle, respectively. 
The results for the vertex-addition columns are trivial since we can just query an optimal solution for the graph with an added isolated vertex; see \cref{thm:addvertextheorem}. 
The hardness results for the one-query case all follow from the same simple \cref{thm:onequerylemma}, variations of which appear in the study of self-reducibility and many other fields; see \cref{app:onequerylemma}.
The findings of \cref{thm:colorer,thm:removevPoly,thm:addePoly} clearly delineate our model from that in reoptimization, where the \NP-hard problems examined in the literature remain \NP-hard despite the significant amount of advice in form of the provided optimal solution; see Böckenhauer et al.~\cite{handbook-reoptimization}.

The results of the second type locate criticality problems in relation to the complexity classes \DP and $\Theta_2^\textnormal{p}$. 
The class $\Theta_2^\textnormal{p}$ was introduced by Wagner~\cite{wag:j:bounded} and represents the languages that can be decided in polynomial time by an algorithm that has access to an \NP oracle under the restriction that all queries are submitted at the same time. The definitions of the classes immediately yield the inclusions $\NP\cup\coNP\subseteq \DP\subseteq\Theta_2^\textnormal{p}$.

Papadimitriou and Wolfe~\cite{pap-wol:j:facets} have shown that \MinimalUnSat 
(the set of unsatisfiable formulas that become satisfiable when an arbitrary clause is deleted) is \DP-complete.
Cai and Meyer~\cite{cai-mey:j:dp} built upon this to prove \DP-completeness of 
\VertexMinimalKayUnCol (the
set of graphs that are not $k$-colorable but become $k$-colorable 
when an arbitrary vertex is deleted),
for all $k \geq 3$. 
With \cref{thm:MU-DP-complete,thm:EdgeMinimalKayUnCol}, we
were able to extend this result to classes that are analogously defined for the much smaller local modification of edge deletion, which is considered the default setting; 
namely, we prove \DP-completeness of 
\EdgeMinimalKayUnCol, for all $k \geq 3$.

In \cref{thm:criticalityandstability}, we show that recognizing criticality and vertex-criticality are in $\Theta_2^\textnormal{p}$ and \DP-hard.
As Joret~\cite
{jor:thesis:entropy-and-stability} points out, 
a construction by 
Papadimitriou and Wolfe~\cite{pap-wol:j:facets} 
proves the \DP-hardness of recognizing $\beta$-critical graphs. 
This problem also lies in $\Theta_2^\textnormal{p}$, 
but no finer classification has been achieved so far.
In \cref{thm:VertexCriticalThetaComplete}, we show that this problem is in fact $\Theta_2^\textnormal{p}$-hard, 
yielding the first known 
$\Theta_2^\textnormal{p}$-completeness result for a criticality problem.

\section{Preprocessing \EThreeSat}\label{sec:sat}

Our main technique for proving the nontrivial hardness results in our model is the following: 
We build in polynomial-time computable solutions
for each locally modified problem instance. That way, the solutions
to the locally modified problem instances do not give away any information
about the instance to be solved. A similar approach is taken in some proofs of
\DP-completeness for minimality 
problems.
Indeed, we can occasionally combine the proof of
\DP-hardness with that of the \NP-hardness of
computing an optimal solution from optimal solutions to locally
modified instances.
Denote by \EThreeCNF the set of nonempty \CNF-formulas with exactly three distinct literals per clause.\footnote{This set is often denoted $\textsc{E3-CNF}$ in the literature.} 
We begin by showing in \cref{thm:3sat} that there is a reduction 
from \EThreeSat (the set of satisfiable \EThreeCNF-formulas) to \EThreeSat 
that builds in polynomial-time computable solutions for
all one-clause-deleted subformulas of the resulting \EThreeCNF-formula. At first glance, this very surprising result may seem dangerously close to proving $\P=\NP$; \cref{cor:SATisNPhardInOurModel} will make explicit where the hardness remains.
We will then use the reduction of \cref{thm:3sat} 
as a preprocessing step in reductions from \EThreeSat to other problems.

\begin{theorem}\label{thm:3sat}
  There is a polynomial-time many-one reduction $f$ from \EThreeSat to \EThreeSat
  and a polynomial-time computable function $\sat$ 
  such that, for every \EThreeCNF-formula $\formula$ and
  for every clause ${\clause}$ in $f(\formula)$,
  $\sat(f(\formula) - {\clause})$ is a satisfying assignment for $f(\formula) - {\clause}$.
\end{theorem}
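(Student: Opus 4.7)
The plan is to construct a polynomial-time many-one reduction $f$ that preserves satisfiability while additionally producing \EThreeCNF-formulas in which every one-clause-deleted subformula admits a polynomial-time computable satisfying assignment. The high-level idea is to amplify $\formula$ with fresh ``escape'' variables that, when activated, trivially satisfy the encoded version of $\formula$; and to then attach a small minimally unsatisfiable gadget that normally forces the escape variables off so that $f(\formula)$ collapses to an \EThreeCNF-rendering of $\formula$. Satisfiability is therefore transported across $f$ because under the gadget-enforced canonical setting the formula is logically equivalent to $\formula$, while any single clause deletion unleashes one of finitely many ``escape modes'' that we have pre-wired into the construction and hence can read off by inspecting the missing clause.

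I would proceed in three steps. First, for each clause $C_i = (\ell_{i,1}\vee\ell_{i,2}\vee\ell_{i,3})$ of $\formula$, I would write a short bundle of 3-clauses over $x_1,\dots,x_n$ together with a small set of fresh per-clause auxiliary variables. The bundle is designed to behave equivalently to $C_i$ under one designated ``canonical'' assignment of the auxiliaries, and to be trivially satisfiable---independently of the $x_j$---under several distinct ``local-escape'' assignments, one for each clause in the bundle. Second, I would construct a minimally unsatisfiable \EThreeCNF-gadget over a few further fresh variables, sharing some of its variables with the auxiliaries. The gadget should enforce the canonical setting when all of its clauses are present while admitting, for each of its own clauses, a uniquely determined ``global-escape'' setting that it releases when that clause is removed. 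Third, I would define $\sat$ as the polynomial-time procedure that, on input $f(\formula) - C$, inspects the syntactic type of the missing clause $C$ (identified by its position in the construction), selects the corresponding local- or global-escape assignment, and fills in all remaining variables with defaults. Correctness then follows by a straightforward case distinction: the many-one reduction property comes from the canonical-mode equivalence, and the built-in solutions from the fact that either the removed clause is a gadget clause (and the global escape activates) or it is a bundle clause (and the associated local escape applies).

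The main obstacle is engineering the clause-bundles in the first step so that deleting a single bundle clause can be repaired locally, without needing to activate the gadget-level escape. A naive augmentation that replaces $C_i$ by $C_i \vee y_i$ with a single auxiliary $y_i$ (then splitting into two 3-clauses) only yields a global escape: removing one of the two resulting clauses leaves the gadget intact, which still pins $y_i$ to its canonical value, and the remaining bundle clauses for $i$ together with the bundles of all $j\ne i$ collectively impose $\formula - C_i$---a restriction that need not be satisfiable. Overcoming this requires each bundle to contain enough internal redundancy, via carefully chosen per-clause auxiliaries distinct from the global gadget variables, to absorb the loss of any one of its own clauses while the rest of $f(\formula)$ (including the gadget) remains untouched. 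Getting the counts and shapes right while keeping every clause a 3-clause with three distinct literals, avoiding variable collisions between different bundles and the gadget, and keeping the construction polynomial in $|\formula|$ will be the delicate combinatorial core of the proof.
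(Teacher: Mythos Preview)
Your diagnosis of the obstacle is exactly right, but the proposed cure does not address it. You note that with the gadget intact and a single clause removed from bundle $i$, the remaining bundles $j\ne i$ still enforce $\bigwedge_{j\ne i} C_j = \Phi - C_i$, which need not be satisfiable. You then propose to give bundle $i$ enough \emph{internal} redundancy (via per-clause auxiliaries untouched by the gadget) so that bundle $i$ minus one clause becomes trivially satisfiable regardless of the $x_j$. But that only neutralises bundle $i$; the untouched bundles $j\ne i$---which by your own design are equivalent to $C_j$ under the gadget-enforced canonical setting---still collectively impose $\Phi - C_i$ on the $x$-variables. No amount of redundancy local to bundle $i$ can change what the \emph{other} bundles demand. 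Your architecture, with its clean separation into per-clause bundles with independent local auxiliaries plus a single shared gadget, therefore fails for precisely the reason you yourself spell out.

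The paper's construction (following Papadimitriou and Wolfe) avoids this by making the escape inherently global rather than local. Instead of a separate gadget, it threads new variables $y_1,\dots,y_m$ through \emph{every} clause: each clause derived from $C_i$ is disjoined with $\pi_i = \bigvee_{k\ne i} y_k$, and further clauses $(\overline{y}_i\vee\overline{y}_j)$ for all $i<j$ force at most one $y_k$ to be true. Setting a single $y_{\hat\imath}=1$ makes $\pi_i=1$ for every $i\ne\hat\imath$, trivialising all other bundles simultaneously; the few clauses left over from bundle $\hat\imath$ are then handled by an explicit assignment to the $x$-variables depending only on the literals of $C_{\hat\imath}$. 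The resulting long clauses are afterwards brought to exact width three by the standard chain-splitting and padding reductions, each checked to preserve the built-in-solution property. The crucial idea your plan is missing is that the escape triggered by removing any single clause must neutralise \emph{all} bundles at once, not just the one containing the removed clause. (A separate minor point: calling the gadget ``minimally unsatisfiable'' is dangerous---taken literally, an unsatisfiable subformula would make $f(\Phi)$ unsatisfiable for every $\Phi$ and $f$ would fail to be a reduction.)
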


\begin{proof}
  Papadimitriou and Wolfe~\cite[Lemma 1]{pap-wol:j:facets}
  give a reduction from \ThreeUnSat to \MinimalUnSat (the set of \CNF-formulas
  that are unsatisfiable but that become satisfiable with the removal of an arbitrary clause). 
  In \cref{sec:satselfreduction}, we show how to enhance this reduction such that it has all properties of our theorem. 
  First, we carefully prove that there is a function 
  $\sat$ 
  that together with the original reduction satisfies all properties of our theorem, except that we may output a formula that is not in \ThreeCNF. 
  In order to rectify this, we show that the standard reduction from \Sat to \Sat that decreases the number of literals per clause to at most three maintains all the required properties. 
  The same is then shown for the standard reduction that transforms \CNF-formulas with at most
  three literals per clause into \EThreeCNF-formulas
  that have exactly three distinct literals per clause.
  Combining these three reductions, we can therefore satisfy all requirements of our theorem. 
\end{proof}

\begin{corollary}\label{cor:SATisNPhardInOurModel}
  Computing a satisfying assignment for a \EThreeCNF-formula whose
  one-clause-deleted
  subformulas all have a satisfying assignment from these assignments is \NP-hard.
\end{corollary}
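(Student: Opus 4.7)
The plan is to derive this corollary as a direct consequence of Theorem \ref{thm:3sat} via a straightforward Turing reduction from \EThreeSat. I would begin by assuming for the sake of argument that a polynomial-time algorithm $\alg$ exists which, given any \EThreeCNF-formula $\Psi$ together with satisfying assignments for each of its one-clause-deleted subformulas, outputs a satisfying assignment for $\Psi$ whenever $\Psi$ is satisfiable. My goal is then to use $\alg$ as a subroutine in a polynomial-time decision procedure for \EThreeSat.

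On an arbitrary \EThreeCNF-formula $\formula$, the procedure I have in mind would first invoke the reduction $f$ of Theorem \ref{thm:3sat} to obtain $f(\formula)$ and then apply $\sat$ to compute $\sat(f(\formula)-\clause)$ for every clause $\clause$ of $f(\formula)$. By Theorem \ref{thm:3sat}, this preprocessing runs in polynomial time and each $\sat(f(\formula)-\clause)$ is a genuine satisfying assignment for $f(\formula)-\clause$, so the precondition required by $\alg$ is met. The procedure would then feed $f(\formula)$ together with these assignments to $\alg$ and check in polynomial time whether the returned assignment satisfies $f(\formula)$.

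Since $f$ is a many-one reduction from \EThreeSat to \EThreeSat, the formula $\formula$ is satisfiable if and only if $f(\formula)$ is. When $f(\formula)$ is satisfiable, $\alg$ returns a valid satisfying assignment and the verification accepts; when $f(\formula)$ is unsatisfiable, no assignment can satisfy $f(\formula)$ and the verification rejects. This yields a polynomial-time decision procedure for \EThreeSat, establishing the claimed \NP-hardness. I anticipate no real obstacle at this stage: the entire difficulty has been absorbed into Theorem \ref{thm:3sat}, where a reduction with built-in subformula solutions is actually constructed; once that is in hand, the corollary follows immediately.
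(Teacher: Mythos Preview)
Your proposal is correct and follows essentially the same approach as the paper's proof: apply the reduction $f$ and the function $\sat$ from \cref{thm:3sat} to equip $f(\formula)$ with satisfying assignments for all its one-clause-deleted subformulas, run the hypothetical algorithm on this data, and use the result to decide \EThreeSat. The paper's proof is just a terser rendering of exactly this argument.
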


\begin{proof}
  Given a \EThreeCNF-formula $\formula$, compute $f(\formula)$, where $f$ is the reduction
  from \cref{thm:3sat}. 
  Now compute $\sat(f(\formula) - {\clause})$ for every clause ${\clause}$ in $f(\formula)$
  and compute a satisfying assignment for $f(\formula)$ from these
  solutions. Use this assignment to determine
  whether $\formula$ is satisfiable.
\end{proof}

\section{Colorability}\label{sec:COL}

As mentioned in the previous section,
the constructions of some \DP-completeness results for minimality problems 
can be adapted to obtain \NP-hardness for computing optimal solutions
from optimal solutions to locally modified instances.
There are remarkably few complexity results for minimality problems; 
fortunately, however, 
\VertexMinimalThreeUnCol (the 
graphs that are not 3-colorable but that are 3-colorable after
deleting any vertex)\footnote{It
should be noted that \VertexMinimalThreeUnCol
is denoted by
$\textsc{Minimal}\textsc{-}\allowbreak{}\textsc{3}\textsc{-}\textsc{Un}\-\textsc{Color}\-\textsc{ability}$ by Cai and Meyer~\cite{cai-mey:j:dp} 
despite the fact that
minimality problems usually refer to the case of edge deletion.} is \DP-complete by reduction from \MinimalThreeUnSat~\cite{cai-mey:j:dp}.
We will show how to extract from said reduction a proof of the fact that 
computing an optimal coloring for a graph from optimal colorings 
for its one-vertex-deleted subgraphs is \NP-hard (\cref{thm:color-v}).
However, the standard notion of criticality is $\chi$-criticality under
edge deletion, 
and the construction by Cai and Meyer~\cite{cai-mey:j:dp} does
unfortunately not yield the analogous result for deleting edges. This was to be expected, since working with edge deletion is much harder. Surprisingly, however, a targeted modification of the constructed graph allows us to establish, through a far more elaborate case distinction, that 
computing an optimal coloring for a graph from optimal colorings 
for its one-edge-deleted subgraphs is \NP-hard (\cref{thm:color-e})
as well as that the related minimality problem 
\EdgeMinimalThreeUnCol is \DP-complete (\cref{thm:MU-DP-complete}).

\begin{lemma}\label{lem:color-v}
  There is a polynomial-time many-one reduction $g$ from \EThreeSat to \ThreeCol
  and a polynomial-time computable function $\opt$ such that, for every \EThreeCNF-formula
  $\formula$ and for every vertex $v$ in $g(\formula)$,
  $\opt(g(\formula) - v)$ is an optimal coloring for $g(\formula) - v$.
\end{lemma}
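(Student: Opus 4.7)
The plan is to compose the preprocessing reduction $f$ of \cref{thm:3sat} with a standard reduction $h$ from \EThreeSat to \ThreeCol, then carefully define $\opt$ by case analysis on the deleted vertex. The reduction $h$ produces a graph with the usual parts: a palette triangle on three vertices $T$, $F$, $B$; for each variable $x_i$ a variable triangle on $\{x_i,\bar x_i,B\}$; and for each clause a constant-size OR-gadget attached to the corresponding three literal vertices and to the palette. Set $g := h \circ f$; then $g(\formula)$ is $3$-colorable iff $\formula$ is satisfiable, and every one-vertex-deleted subgraph still contains a triangle, so an optimal coloring must use exactly three colors whenever one exists, and our task reduces to exhibiting such a $3$-coloring in polynomial time.

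For the definition of $\opt$, I would distinguish three cases based on which part of $g(\formula)$ contains the deleted vertex $v$. If $v$ lies inside the OR-gadget of some clause $\clause$, I set $\opt(g(\formula)-v)$ to the coloring derived from $\sat(f(\formula)-\clause)$: color each literal vertex according to this assignment, color the palette with its standard three colors, extend through each still-intact clause gadget (which works because the assignment satisfies the corresponding clause), and fill in the punctured gadget of $\clause$ using the additional slack from the missing internal vertex. If $v$ is a literal vertex of some variable $x_i$, I pick any clause $\clause$ containing $x_i$, extract $\sat(f(\formula)-\clause)$, and proceed similarly; the clause gadgets that lost their $x_i$-input can be completed because a missing input is at least as permissive as one colored $T$, and the remaining vertex $\bar x_i$ of the broken variable triangle has an unconstrained choice between $T$ and $F$. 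If $v$ is one of $T$, $F$, $B$, I describe an explicit $3$-coloring of the remainder using the flexibility introduced by the missing palette color together with $\sat(f(\formula)-\clause)$ on the literals for some arbitrarily chosen clause $\clause$.

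The main obstacle is specifying and verifying an OR-gadget so that its one-vertex-deleted subgraphs always admit a proper $3$-coloring consistent with every feasible boundary coloring of the literal and palette vertices, and so that an intact gadget admits such an extension whenever at least one of its literal inputs is colored $T$. This is essentially the same property exploited by Cai and Meyer in their \DP-completeness proof for \VertexMinimalThreeUnCol~\cite{cai-mey:j:dp}, so I would reuse their gadget and adapt their case analysis. The only new work beyond their argument is that I must also cover the case when $\formula$ is satisfiable (and hence $g(\formula)$ is already $3$-colorable), which comes for free: \cref{thm:3sat} guarantees that $\sat(f(\formula)-\clause)$ is defined and satisfies $f(\formula)-\clause$ for every clause $\clause$, regardless of whether $\formula$ itself is satisfiable.
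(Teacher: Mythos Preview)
Your approach matches the paper's: set $g=h\circ f$ with $f$ from \cref{thm:3sat} and $h$ the Cai--Meyer reduction, then exhibit polynomial-time $3$-colorings of every $g(\formula)-v$ by case analysis on $v$, using $\sat(f(\formula)-\clause)$ for a well-chosen clause $\clause$. The paper additionally shortcuts by first proving the edge-deletion version (\cref{oldlem:color-e}) and then observing that a $3$-coloring of $g(\formula)-e$ restricts to one of $g(\formula)-v$ whenever $e$ is incident to $v$.

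One caution about your palette case. Your description of $h$ as having a palette \emph{triangle} on $\{T,F,B\}$ does not match the Cai--Meyer construction, which has only two distinguished vertices $v_\textnormal{c}$ (adjacent to all literal vertices) and $v_\textnormal{s}$ (adjacent to $v_\textnormal{c}$ and to all $a$- and $b$-vertices of the clause gadgets), joined by a single edge. This matters: in a textbook reduction with a genuine palette triangle, the vertex $T$ typically has degree~$2$ (adjacent only to $F$ and $B$), and deleting such a vertex yields no new $3$-colorings whatsoever, so $g(\formula)-T$ is $3$-colorable iff $g(\formula)$ is and your ``flexibility introduced by the missing palette color'' evaporates. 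Once you drop the triangle and use the actual Cai--Meyer graph, your three-case analysis (clause vertex, literal vertex, $v_\textnormal{c}$ or $v_\textnormal{s}$) goes through.
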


\begin{proof}
  Given a \EThreeCNF-formula $\formula$, let $g(\formula) = h(f(\formula))$, where $f$ is the reduction
  from \cref{thm:3sat} and $h$ is the reduction from \MinimalThreeUnSat
  to \VertexMinimalThreeUnCol by Cai and Meyer~\cite{cai-mey:j:dp}. 
  Since $h$ also reduces
  \EThreeSat to \ThreeCol~\cite[Lemma 2.2]{cai-mey:j:dp}, so does $g$. 
  A careful inspection of the reduction $g$  reveals that there is a polynomial-time computable function
  $\opt$ such that, for every vertex $v$ in $g(\formula)$, $\opt(g(\formula) - v)$ is a
  3-coloring of $g(\formula) - v$. We can also verify that  
  $g(\formula) - v$ does not have a 2-coloring, hence 
  $\opt(g(\formula) - v)$ is an optimal coloring. 
  We do not dive into the details as this lemma immediately follows
  from the proof of the analogous result for edge deletion
  (\cref{oldlem:color-e}), as explained in \cref{app:oldlem-color-e}. 
\end{proof}

\begin{theorem}\label{thm:color-v}
  Computing an optimal coloring for a graph from optimal colorings 
  for its one-vertex-deleted subgraphs is \NP-hard.
\end{theorem}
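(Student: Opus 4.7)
The plan is to derive the theorem directly from \cref{lem:color-v}, in close analogy to how \cref{cor:SATisNPhardInOurModel} follows from \cref{thm:3sat}. Assume we have a polynomial-time algorithm $\alg$ that, given any graph $G$ together with optimal colorings for each of its one-vertex-deleted subgraphs, outputs an optimal coloring for $G$. I will turn $\alg$ into a polynomial-time decision procedure for \EThreeSat, establishing \NP-hardness via a polynomial-time Turing reduction.

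The reduction proceeds as follows. On input an \EThreeCNF-formula $\formula$, first compute $g(\formula)$ using the many-one reduction from \cref{lem:color-v}; this can be done in polynomial time, and $g$ reduces \EThreeSat to \ThreeCol. Next, enumerate the vertices $v$ of $g(\formula)$ and, for each, compute $\opt(g(\formula) - v)$, which by \cref{lem:color-v} is an optimal coloring of $g(\formula) - v$. This assembles, in polynomial time, exactly the type of input that $\alg$ expects. Feed $g(\formula)$ together with these optimal colorings into $\alg$ to obtain an optimal coloring of $g(\formula)$, and in particular the value $\chi(g(\formula))$. Since $g$ reduces \EThreeSat to \ThreeCol, $\formula$ is satisfiable if and only if $\chi(g(\formula)) \le 3$, which can be read off directly from the coloring produced by $\alg$.

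Since every step runs in polynomial time, the existence of $\alg$ would place \EThreeSat in $\P$, which would imply $\P = \NP$. Hence computing an optimal coloring for a graph from optimal colorings for its one-vertex-deleted subgraphs is \NP-hard.

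There is no real obstacle beyond invoking \cref{lem:color-v}; the substantive difficulty has already been absorbed into that lemma, where the reduction $g$ was carefully designed so that optimal solutions to all one-vertex-deleted neighbors are computable in polynomial time and thus carry no useful information about the satisfiability of $\formula$. The only point to verify explicitly is that $\opt$ is applied to every vertex-deleted subgraph, so the hypothetical algorithm $\alg$ receives a complete list of neighbor solutions as required by the definition of the problem.
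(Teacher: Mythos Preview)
Your proof is correct and follows essentially the same approach as the paper: compute $g(\formula)$, use \cref{lem:color-v} to produce optimal colorings for all one-vertex-deleted subgraphs, feed these to the hypothetical algorithm to obtain an optimal coloring of $g(\formula)$, and read off whether $\formula$ is satisfiable from whether $\chi(g(\formula))\le 3$. The paper's proof is more terse but identical in substance.
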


\begin{proof}
  Given a \EThreeCNF-formula $\formula$, compute $g(\formula)$, where $g$ is the reduction
  from \cref{lem:color-v}, compute $\opt(g(\formula) - v)$ for every vertex $v$ in $g(\formula)$,
  and from these optimal solutions compute one for $g(\formula)$. 
  This determines whether $g(\formula)$ is 3-colorable and 
  thus whether $\formula$ is satisfiable.
\end{proof}

\begin{lemma}\label{oldlem:color-e}
  There is a polynomial-time many-one reduction $g$ from \EThreeSat to \ThreeCol
  and a polynomial-time computable function $\opt$ such that, for every \EThreeCNF-formula
  $\formula$ and for every edge $e$ in $g(\formula)$,
  $\opt(g(\formula) - e)$ is an optimal coloring of $g(\formula) - e$.
\end{lemma}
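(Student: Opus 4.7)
The plan is to define $g(\formula) = h(f(\formula))$, where $f$ is the reduction from \cref{thm:3sat} that builds in polynomial-time computable satisfying assignments for all one-clause-deleted subformulas of $f(\formula)$, and $h$ is a targeted modification of the Cai--Meyer reduction from \EThreeSat to \ThreeCol used in \cref{lem:color-v}. The standard backbone I would use are three palette vertices $T, F, B$ forming a triangle, two literal vertices per variable joined to each other and to $B$, and one three-literal clause gadget per clause that is 3-colorable exactly when at least one literal in the clause is colored $T$. Correctness of $g$ as a many-one reduction from \EThreeSat to \ThreeCol is then immediate, since this is essentially the classical reduction.

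Next I would augment the construction with additional edges and small auxiliary gadgets so that for every edge $e$ in $g(\formula)$ the graph $g(\formula) - e$ is 3-colorable (this is the ``targeted modification'' alluded to in the surrounding text). The function $\opt$ would then be defined by a case distinction on the structural role of the deleted edge $e$. For an edge inside the palette triangle, one fixes a proper 2-coloring of the two surviving palette vertices and extends by any preassigned satisfying assignment $\sat(f(\formula) - \clause)$ of \cref{thm:3sat}. For an edge inside the clause gadget of some clause $\clause$, the constraint imposed by that gadget is relaxed, so one uses $\sat(f(\formula) - \clause)$ to consistently color all literal vertices and then extends locally across the relaxed gadget. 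The remaining edge classes---the $\{x_i,\bar x_i\}$ edges, the edges joining literals to $B$, and the edges introduced by the augmentation---would be absorbed by the auxiliary gadgets, which I would design so that the loss of such an edge still permits the coloring derived from some suitable $\sat(f(\formula) - \clause)$ to extend to the whole graph.

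The main obstacle, and the reason this requires a ``far more elaborate case distinction'' than the vertex-deletion version, is that deleting a single edge is a far gentler modification than deleting a vertex: only one local constraint weakens, so one cannot afford to disable an entire gadget the way one can when all edges incident to a vertex vanish simultaneously. The gadget must therefore be engineered so that every individual edge is non-essential and its deletion is polynomial-time repairable from the precomputed $\sat$ values. A secondary subtlety I would address is ensuring that $g(\formula) - e$ is never 2-colorable, so that the 3-coloring produced by $\opt$ is truly optimal; I would handle this by arranging that some odd cycle (for instance a redundant triangle that shares no edge with any particular $e$ under consideration) always survives. Once every edge class has been shown to admit a uniformly polynomial-time computable 3-coloring, the lemma follows by simply letting $\opt$ identify the class of $e$ and assemble the corresponding coloring.
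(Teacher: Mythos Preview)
Your proposal has a genuine gap: you never actually specify the construction. Saying you ``would augment the construction with additional edges and small auxiliary gadgets'' and ``would design'' them so that every edge becomes individually removable is a wish, not a proof; the entire content of the lemma lies in exhibiting such a graph, and your sketch defers exactly this to unspecified future work. Your palette-triangle case already illustrates the trouble: the clause gadgets rely on the three palette colors being pairwise distinct, so once an edge of the triangle is gone, ``fixing a proper 2-coloring of the two surviving palette vertices'' and then plugging in $\sat(f(\formula)-\clause)$ does not repair anything---you have not shown why the clause gadgets still 3-color when two palette vertices may share a color.

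Your direction is also opposite to the paper's. The Cai--Meyer reduction $h$ has no palette triangle; it has only two control vertices $v_{\mathrm c}$ and $v_{\mathrm s}$ joined by a single edge, with $v_{\mathrm c}$ adjacent to all literal vertices and $v_{\mathrm s}$ adjacent to the clause-gadget vertices. The paper's ``targeted modification'' is to \emph{delete} this edge, setting $g(\formula)=h(f(\formula))-\{v_{\mathrm c},v_{\mathrm s}\}$, and then to verify (nontrivially) that $g$ is still a many-one reduction to \ThreeCol. No auxiliary gadgets are introduced. The case analysis is then carried out directly on the existing Cai--Meyer clause gadgets: for each structural edge type---$\{x_i,\overline{x}_i\}$, $\{v_{\mathrm c},\ell_i\}$, or one of the several kinds incident to a clause gadget---one picks a clause $\clause$ appropriately, colors $g(\formula)$ minus the nine vertices of that clause from the assignment $\sat(f(\formula)-\clause)$ supplied by \cref{thm:3sat}, and then explicitly extends or locally recolors across that one gadget. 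Adding a third palette vertex and extra gadgets, as you propose, would only multiply the edge types you must handle and reintroduce precisely the kind of globally rigid edge (a palette edge) that the paper's deletion of $\{v_{\mathrm c},v_{\mathrm s}\}$ is designed to eliminate.
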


\begin{proof}
  Given a \EThreeCNF-formula $\formula$, let $g(\formula) = h(f(\formula)) - e$,
  where $f$ is the reduction
  from \cref{thm:3sat}, $h$ is the reduction
  to \VertexMinimalThreeUnCol by Cai and Meyer~\cite{cai-mey:j:dp}, and $e$ is the edge  $\{v_\textnormal{c},v_\textnormal{s}\}$, with $v_\textnormal{c}$ being the unique vertex adjacent to all variable-setting vertices and $v_\textnormal{s}$ being the only remaining neighbor vertex of $v_\textnormal{c}$. 
  We prove in detail that $g$ has all the desired properties in \cref{app:oldlem-color-e}. See \cref{fig:caimeyertotal} for an example of the construction. 
\end{proof}

\begin{theorem}\label{thm:color-e}
  Computing an optimal coloring for a graph from optimal colorings 
  for its one-edge-deleted subgraphs is \NP-hard.
\end{theorem}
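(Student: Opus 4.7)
The plan is to carry out a short Turing reduction from \EThreeSat, using \cref{oldlem:color-e} to manufacture an instance whose one-edge-deleted optimal colorings can all be produced in polynomial time, with no oracle help required. So I would assume that there exists a polynomial-time algorithm \alg that, given a graph $G$ together with a list of optimal colorings for each of the one-edge-deleted subgraphs $G-e$, outputs an optimal coloring of $G$.

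Given an \EThreeCNF-formula $\formula$, I would first compute $g(\formula)$, where $g$ is the reduction supplied by \cref{oldlem:color-e}. Next, for each of the polynomially many edges $e$ in $g(\formula)$, I would compute $\opt(g(\formula)-e)$; by the guarantee of \cref{oldlem:color-e}, this is an optimal coloring of $g(\formula)-e$ and is computable in polynomial time. I would then invoke \alg on $g(\formula)$ together with this collection of optimal colorings to obtain, in polynomial time, an optimal coloring of $g(\formula)$. Counting the colors used in the returned coloring decides whether $\chi(g(\formula))\le 3$, and since $g$ reduces \EThreeSat to \ThreeCol by \cref{oldlem:color-e}, this in turn decides whether $\formula$ is satisfiable.

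The proof is therefore essentially identical in structure to the proof of \cref{thm:color-v}, with ``vertex'' replaced by ``edge'' throughout. All of the substantive difficulty of the edge-deletion case, namely the careful construction of a graph whose one-edge-deleted subgraphs admit polynomial-time computable optimal colorings, has been shouldered by \cref{oldlem:color-e}; nothing further remains to be done here beyond assembling the Turing reduction sketched above. Consequently I anticipate no obstacle at this final step, even though arriving at \cref{oldlem:color-e} required the more elaborate case distinction advertised in the discussion preceding the statement.
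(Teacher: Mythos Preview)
Your proposal is correct and matches the paper's own proof essentially verbatim: the paper simply states that the same argument as for \cref{thm:color-v} applies, which is precisely the Turing reduction you spell out, invoking \cref{oldlem:color-e} in place of \cref{lem:color-v}.
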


\begin{proof}
  The same argument as for \cref{thm:color-v} can be applied here.
\end{proof}

\begin{theorem}\label{thm:MU-DP-complete}
  \EdgeMinimalThreeUnCol is \DP-complete.
\end{theorem}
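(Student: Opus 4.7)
The plan is to split the proof into membership in \DP and \DP-hardness. For membership, I would express \EdgeMinimalThreeUnCol as the intersection of the \coNP-language $\{G : \chi(G) > 3\}$ and the \NP-language $\{G : \chi(G-e) \le 3 \text{ for every edge } e \in E(G)\}$. The latter is in \NP because a succinct certificate consists of one 3-coloring per edge of $G$, which takes polynomial total space and can be verified in polynomial time. This places \EdgeMinimalThreeUnCol in \DP.

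For \DP-hardness, I would reduce from \MinimalThreeUnSat. This problem can be shown to be \DP-hard by a modest adaptation of Papadimitriou and Wolfe's proof of \DP-completeness for \MinimalUnSat~\cite{pap-wol:j:facets}: one combines their construction with the 3-CNF standardization techniques developed in the proof of \cref{thm:3sat}, verifying that the conversion preserves minimality. The reduction itself would be the map $g$ from \cref{oldlem:color-e}, and I would show that $\formula \in \MinimalThreeUnSat$ if and only if $g(\formula) \in \EdgeMinimalThreeUnCol$. The forward direction follows naturally from \cref{oldlem:color-e}: when $\formula$ is unsatisfiable the graph $g(\formula)$ is not 3-colorable by construction, and $\opt(g(\formula)-e')$ is an optimal 3-coloring of $g(\formula)-e'$ for every edge $e'$. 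The backward direction requires arguing that the edge-minimal 3-uncolorability of $g(\formula)$ in fact reflects the minimality of unsatisfiability of $\formula$: for each clause $C$ of $\formula$ one exhibits a specific edge $e_C$ of $g(\formula)$ whose removal makes $g(\formula)$ 3-colorable precisely when the removal of $C$ renders $\formula$ satisfiable.

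The main obstacle is this backward correspondence. The graph $g(\formula)$ contains edges of several distinct types---those inherited from the Cai-Meyer variable-setting gadgets, those from the clause gadgets, and the structural modifications introduced by the targeted edge removal in \cref{oldlem:color-e}---and for each edge $e'$ I must verify that $g(\formula)-e'$ is 3-colorable precisely when the corresponding modification to $\formula$ produces a satisfiable formula. For non-clause edges this amounts to showing that 3-colorability persists unconditionally, so they do not spoil edge-minimality; for clause-associated edges it matches 3-colorings of $g(\formula)-e_C$ with satisfying assignments of $\formula - C$. The required edge-by-edge case analysis is the ``far more elaborate case distinction'' alluded to in the introduction, and constitutes the heart of the proof.
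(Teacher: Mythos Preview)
Your membership argument is fine and matches the paper. The hardness argument, however, has a genuine gap: the map $g$ from \cref{oldlem:color-e} is \emph{not} a reduction from \MinimalThreeUnSat to \EdgeMinimalThreeUnCol. Recall that $g(\formula)=h(f(\formula))-\{v_\textnormal{c},v_\textnormal{s}\}$, where $f$ is the preprocessing of \cref{thm:3sat}. The whole point of $f$ is that \emph{every} one-clause-deleted subformula of $f(\formula)$ is satisfiable, irrespective of whether $\formula$ itself was minimal. Consequently, for \emph{any} unsatisfiable $\formula$---minimal or not---the graph $g(\formula)$ lands in \EdgeMinimalThreeUnCol (this is exactly what \cref{oldlem:color-e} establishes). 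Your proposed backward direction therefore fails: if $\formula$ is unsatisfiable but not minimally so, $g(\formula)$ is still edge-minimally 3-uncolorable, and there is no edge $e_C$ whose removal fails to 3-color the graph. The claimed equivalence simply does not hold; what you have written down is a reduction from \ThreeUnSat, which yields only \coNP-hardness.

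The fix, and what the paper actually does, is to drop $f$ and use $\formula\mapsto h(\formula)-\{v_\textnormal{c},v_\textnormal{s}\}$ directly. The case analysis carried out in the proof of \cref{oldlem:color-e} never uses anything about $f(\formula)$ beyond the fact that each clause deletion is satisfiable; when the input $\formula$ is already in \MinimalThreeUnSat this property holds by definition, so the same analysis shows $h(\formula)-\{v_\textnormal{c},v_\textnormal{s}\}\in\EdgeMinimalThreeUnCol$. For the converse, if $h(\formula)-\{v_\textnormal{c},v_\textnormal{s}\}$ is edge-minimally 3-uncolorable then it is in particular vertex-minimally 3-uncolorable, and (using that removing $\{v_\textnormal{c},v_\textnormal{s}\}$ does not affect 3-colorability, as shown in the proof of \cref{oldlem:color-e}) one recovers $\formula\in\MinimalThreeUnSat$ via the Cai--Meyer correspondence. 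No fresh ``elaborate case distinction'' is needed here; the work was already done in \cref{oldlem:color-e}.
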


\begin{proof}
  Membership in \DP is immediate, since
  given a graph $G = (V,E)$, determining whether $G - e$
  is $3$-colorable for every $e\in E$ is in \NP
  and so is determining whether $G$ is $3$-colorable.
  As for \DP-hardness, the argument from the proof of \cref{oldlem:color-e}
  shows that mapping $\formula$ to $h(\formula) - \{v_\textnormal{c},v_\textnormal{s}\}$, where $h$
  is the reduction from \MinimalThreeUnSat to \VertexMinimalThreeUnCol
  by Cai and Meyer~\cite{cai-mey:j:dp}, gives a reduction from 
  \MinimalThreeUnSat to \EdgeMinimalThreeUnCol (and to \VertexMinimalThreeUnCol as well).
  Recall that \MinimalThreeUnSat is \DP-hard~\cite{pap-wol:j:facets}.
\end{proof}

Cai and Meyer~\cite{cai-mey:j:dp} show \DP-completeness for \VertexMinimalKayUnCol, 
for all $k \geq 3$. We now prove that the analogous result for deletion of edges holds as well.

\begin{theorem}\label{thm:EdgeMinimalKayUnCol}
  \EdgeMinimalKayUnCol is \DP-complete, for every $k \geq 3$.
\end{theorem}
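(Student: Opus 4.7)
The plan is to verify membership in $\DP$ directly from the definition, and then establish $\DP$-hardness by reducing from \EdgeMinimalThreeUnCol, which is $\DP$-hard by \cref{thm:MU-DP-complete}.

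For membership, given a graph $G = (V, E)$, the property of being minimally $k$-uncolorable under edge deletion is the conjunction of the $\coNP$ statement ``$G$ is not $k$-colorable'' and the $\NP$ statement ``$G - e$ is $k$-colorable for every $e \in E$''. The latter is in $\NP$ because one can simultaneously guess a $k$-coloring of each of the polynomially many edge-deleted subgraphs. Hence the language lies in $\DP$.

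For hardness, I would send $G$ to $G' := G + K_{k-3}$, the graph join of $G$ with a $(k-3)$-clique; for $k = 3$ this degenerates to the identity reduction, which already suffices. Using the classical identity $\chi(H_1 + H_2) = \chi(H_1) + \chi(H_2)$, one obtains $\chi(G') = \chi(G) + (k-3)$ and, by applying the same identity to $G' - e$, $\chi(G' - e) = \chi(G - e) + (k-3)$ for any $e \in E(G)$, as well as $\chi(G' - e) = \chi(G) + (k-4)$ for any edge $e$ inside the clique. If $G$ is minimally $3$-uncolorable, so that $\chi(G) = 4$ and $\chi(G - e) = 3$ for every $e \in E(G)$, these formulas immediately give $\chi(G') = k+1$ and $\chi(G' - e) = k$ for every edge $e$ either inside $G$ or inside the clique. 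Conversely, if $G \notin$ \EdgeMinimalThreeUnCol, then either $G$ is $3$-colorable, so $\chi(G') \leq k$ and $G' \notin$ \EdgeMinimalKayUnCol, or some edge $e$ of $G$ satisfies $\chi(G - e) \geq 4$, giving $\chi(G' - e) \geq k+1$, and again $G' \notin$ \EdgeMinimalKayUnCol.

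The only subtle case I anticipate is an edge $e = \{u, v\}$ with $u \in V(G)$ and $v$ in the clique, since $G' - e$ is then no longer a clean join. To handle it, I would invoke the standard fact that any $4$-edge-critical graph $G$ satisfies $\chi(G - u) \leq 3$: for any edge $e'$ incident to $u$ (such an edge exists since critical graphs have minimum degree at least $k-1 = 3$), $G - u$ is a subgraph of $G - e'$, hence $\chi(G - u) \leq \chi(G - e') = 3$. Given this, I would exhibit a $k$-coloring of $G' - e$ explicitly: color the clique with $\{4, 5, \ldots, k\}$ so that $v$ receives color $4$, $3$-color $G - u$ with $\{1, 2, 3\}$, and finally assign to $u$ any color from $\{1, 2, 3, 4\}$ not used by its neighbors in $G$. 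This is always feasible because $u$'s $G$-neighbors occupy at most three colors from $\{1, 2, 3\}$, and color $4$ is available since the edge to $v$ was removed. With this case settled, the equivalence $G \in$ \EdgeMinimalThreeUnCol $\iff G' \in$ \EdgeMinimalKayUnCol follows, completing the reduction.
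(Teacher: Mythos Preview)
Your proposal is correct and follows essentially the same approach as the paper: the same construction $G \mapsto G + K_{k-3}$, the same key identity $\chi(H_1 + H_2) = \chi(H_1) + \chi(H_2)$, and the same three-way case split on edges of the join. The only cosmetic difference is in the join-edge case, where the paper takes a 3-coloring of $G - \hat{e}$ for an edge $\hat{e}$ incident to $u$ and recolors $u$ with the color of $v$, whereas you take a 3-coloring of $G - u$ and then pick a free color for $u$; these are the same idea, and your justification via $\chi(G-u)\le 3$ is in fact derived from exactly the edge-deletion argument the paper uses directly.
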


\begin{proof}
  Membership in \DP is again immediate.  To show hardness for $k \geq 4$, we
  reduce \EdgeMinimalThreeUnCol to \EdgeMinimalKayUnCol.
  We use
  the construction for deleting vertices~\cite[Theorem 3.1]{cai-mey:j:dp} and map
  graph $G$ to $G+K_{k-3}$.\footnote{For two graphs $G_1$ and $G_2$, the \emph{graph join} $G_1+G_2$ is the disjoint union $G_1\cup G_2$ plus a \emph{join edge} added from every vertex of $G_1$ to every vertex of $G_2$; see, e.g., Harary's textbook on graph theory \cite[p. 21]{har:graph-theory}.} 
  Note that~$\chi(K_{k-3}) = k - 3$ and $\chi(H+H') = \chi(H) + \chi(H')$ for any two graphs $H$ and $H'$. 
  First suppose $G + K_{k-3}$ is in \EdgeMinimalKayUnCol.
  Then $G + K_{k-3}$ is not $k$-colorable, and so $G$ is not 3-colorable. 
  Let $e$ be an edge in $G$.
  Then $(G - e) + K_{k-3} = (G+K_{k-3}) - e$ is $k$-colorable, 
  and thus $G - e$ is 3-colorable.
  It follows that $G$ is in \EdgeMinimalThreeUnCol.
  
  Now suppose $G$ is in \EdgeMinimalThreeUnCol.
  Then $G + K_{k-3}$ is not $k$-colorable.
  Let $e$ be an edge in
  $G + K_{k-3}$. If $e$ is an edge in $G$, then
  $G - e$ is 3-colorable and so
  $(G + K_{k-3}) - e = (G - e) + K_{k-3}$ is $k$-colorable.
  If $e$ is an edge in $K_{k-3}$, then
  $K_{k-3} - e$ is $(k-4)$-colorable and
  $G$ is 4-colorable (let $\hat{e}$ be any edge in $G$, take a 3-coloring of $G - \hat{e}$,
  and change the color of one of the vertices incident to $\hat{e}$ to the remaining 
  color), so
  $(G + K_{k-3}) - e = G + (K_{k-3} - e)$ is $k$-colorable.
  Finally, if $e=\{v,w\}$ for a vertex $v$ in $G$ and a vertex $w$ in
  $K_{k-3}$, let $\hat{e}$ be an edge in $G$ incident to $v$, take a
  3-coloring of $G - \hat{e}$, take a disjoint $(k - 3)$-coloring
  of $K_{k-3}$, and change the color of $v$ to the color of $w$.
  As a result, for all edges $e$ in $G + K_{k-3}$, 
  $(G + K_{k-3}) - e$ is $k$-colorable.
  It follows that $G + K_{k-3}$ is in \EdgeMinimalKayUnCol.
\end{proof}

The construction above does not prove the 
analogues of \cref{lem:color-v,oldlem:color-e}:
Note that $G$ is $3$-colorable if and only if 
$(G + K_{k-3}) - v$ and $(G + K_{k-3}) - e$ are both  $(k-1)$-colorable for every vertex $v$ in $K_{k-3}$
and for every edge $e$ in $K_{k-3}$, and so we can certainly determine
whether a graph is 3-colorable from the optimal solutions to the one-vertex-deleted subgraphs and one-edge-deleted subgraphs of $G+K_{k-3}$ in polynomial time. 
Turning to criticality and vertex-criticality, 
we can bound their complexity as follows. 

\begin{theorem}\label{thm:criticalityandstability}
  The two problems of determining whether a graph is critical and whether it is vertex-critical are both in $\Theta_2^\textnormal{p}$ and \DP-hard. 
\end{theorem}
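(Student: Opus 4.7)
The plan for the upper bound is to use non-adaptive access to an \NP oracle in order to compute $\chi(G)$ and $\chi(G - \delta)$ for every deletable element $\delta$ in parallel, where $\delta$ ranges over edges for criticality and over vertices for vertex-criticality. I would simultaneously submit the polynomially many queries ``is $G$ $k$-colorable?'' for each $k \in \{1,\ldots,|V(G)|\}$ together with ``is $G - \delta$ $k$-colorable?'' for each such $\delta$ and each $k$; a polynomial-time post-processing step would then extract the relevant chromatic numbers from the answers and verify whether $\chi(G - \delta) < \chi(G)$ holds for every $\delta$. This places both problems in $\Theta_2^\textnormal{p}$.

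For \DP-hardness, I would reduce from \MinimalThreeUnSat, which is \DP-complete by Papadimitriou and Wolfe~\cite{pap-wol:j:facets}. For vertex-criticality, I would use Cai-Meyer's many-one reduction~\cite{cai-mey:j:dp} $\phi \mapsto h(\phi)$ to \VertexMinimalThreeUnCol; for criticality under edge deletion, I would use the modified reduction $\phi \mapsto h(\phi) - \{v_\textnormal{c}, v_\textnormal{s}\}$ to \EdgeMinimalThreeUnCol established in \cref{oldlem:color-e} and \cref{thm:MU-DP-complete}. In either case, if $\phi \in \MinimalThreeUnSat$ then the output graph is $4$-chromatic and every one-vertex- (respectively one-edge-) deleted subgraph is $3$-colorable, so the output is vertex-critical (respectively critical) at level $4$.

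The converse direction splits into two subcases. If $\phi$ is unsatisfiable but not minimally so, the correspondence built into the Cai-Meyer reduction already guarantees a vertex (respectively edge) whose deletion leaves the output still $4$-chromatic, so the strict-decrease condition fails and the output is not (vertex-)critical. The main obstacle is the satisfiable subcase: here the output has chromatic number exactly $3$, and for it to be critical, every one-element-deleted subgraph would have to be bipartite. I would dispense with this scenario by invoking the classical characterization that the only $3$-chromatic vertex- or edge-critical graphs (up to isolated vertices) are the odd cycles, combined with the structural observation that the Cai-Meyer construction contains the vertex $v_\textnormal{c}$ adjacent to all $2n$ variable-setting vertices, and hence of degree far exceeding $2$, so the output is never an odd cycle for any non-trivial input formula.
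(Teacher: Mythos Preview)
Your proposal is correct and follows essentially the same strategy as the paper. Both arguments establish membership via parallel queries to a \DecCol oracle to compute all relevant chromatic numbers, and both prove hardness by reducing from \MinimalThreeUnSat through the Cai--Meyer construction. The paper is slightly more economical in that it uses the single map $\phi \mapsto h(\phi) - \{v_\textnormal{c}, v_\textnormal{s}\}$ for \emph{both} criticality and vertex-criticality, whereas you invoke the unmodified $h(\phi)$ for the vertex case; either choice works. For the converse, the paper argues in one line that since the output is always $4$-colorable, being (vertex-)critical coincides with lying in \EdgeMinimalThreeUnCol (respectively \VertexMinimalThreeUnCol), and then appeals to the already-established reduction. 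Your case split, which handles the satisfiable subcase via the odd-cycle characterisation of $3$-critical graphs, is more elaborate but also more explicit: the paper's one-liner, read literally, is not true for arbitrary $4$-colorable graphs (odd cycles are $4$-colorable and critical yet not in \EdgeMinimalThreeUnCol), so some structural fact about the specific construction---your degree argument at $v_\textnormal{c}$, or equivalently the presence of multiple vertex-disjoint triangles---is genuinely needed to rule out criticality at level~$3$.
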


\begin{proof}
  For the $\Theta_2^\textnormal{p}$-membership of the two problems, we observe that the relevant chromatic numbers of a graph $G = (V,E)$ and its neighbors can be computed by querying the \NP oracle $\DecCol=\{(G,k)\,\mid\,\chi(G) \leq k\}$ 
  for every $(G,k)$, $(G - e, k)$, and $(G - v, k)$ for every $e\in E$, $v\in V$,
  and $k \leq \|V(G)\|$ in parallel.

  For the \DP-hardness of the two problems, we prove that $h(\formula)-\{v_\textnormal{c},v_\textnormal{s}\}$ is a reduction from \MinimalThreeUnSat to both of them. 
  We have already seen that it reduces \MinimalThreeUnSat to \EdgeMinimalThreeUnCol. 
  Hence, for every $\formula\in\MinimalThreeUnSat$, the graph $h(\formula)-\{v_\textnormal{c},v_\textnormal{s}\}$ is in  $\EdgeMinimalThreeUnCol\subseteq\VertexMinimalThreeUnCol$ and thus both critical and vertex-critical.
  For the converse it suffices to note that, for every $\formula\in\CNF$ with clauses of size at most 3,  $h(\formula)-\{v_\textnormal{c},v_\textnormal{s}\}$ is 4-colorable and thus in \EdgeMinimalThreeUnCol (in \VertexMinimalThreeUnCol, respectively) if and only if it is critical (vertex-critical, respectively).
\end{proof}

The exact complexity of these problems remains open, however. 
In particular, it is unknown whether they are 
$\Theta_2^\textnormal{p}$-hard.
This contrasts with the case of Vertex Cover, 
for which we prove in \cref{thm:VertexCriticalThetaComplete} that 
recognizing $\beta$-vertex-criticality is indeed $\Theta_2^\textnormal{p}$-complete.

Before that, however, we return to our model and consider Colorability under the local modification of adding an edge. If we allow only one query, the problem stays NP-hard via a simple Turing reduction: Iteratively adding edges to the given instance eventually leads to a clique as a trivial instance, see \cref{thm:onequerylemma} in \cref{app:onequerylemma}.  
Note that the restriction to one query is crucial for this reduction to work; without it, the branching may lead to an exponential blowup in the number of instances that need to be considered. The following theorem shows that this breakdown of the hardness proof is 
inevitable 
unless $\P=\NP$ 
since the problem becomes in fact polynomial-time solvable if just one more oracle call is granted. 

\begin{theorem}\label{thm:colorer}
  There is a polynomial-time algorithm
  that computes an optimal coloring for a graph from optimal colorings of all
  its one-edge-added supergraphs; in fact, two optimal colorings, one for each of two specific one-edge-added supergraphs, suffice.
\end{theorem}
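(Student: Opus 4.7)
My plan is to design a polynomial-time algorithm that first performs a structural analysis of $G$ and then makes at most two carefully chosen oracle queries, based on the structure of the complement $\overline{G}$.

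First, if $G$ is a complete graph, then $G$ has no non-edges, so no queries are possible and the only optimal coloring assigns each vertex a distinct color. Second, if $G$ is complete multipartite---equivalently, if $\overline{G}$ is a disjoint union of cliques, a property testable in polynomial time---then an optimal coloring is obtained by assigning each connected component of $\overline{G}$ its own color, again requiring no queries.

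In the remaining case, $G$ is neither complete nor complete multipartite, so $\overline{G}$ is not a disjoint union of cliques and must therefore contain an induced $P_3$. This yields three vertices $a, b, c$ with $\{a,b\}, \{b,c\} \notin E(G)$ and $\{a, c\} \in E(G)$, locatable in polynomial time by iterating over vertex triples. I would then query the oracle for an optimal coloring $c_1$ of the graph $G + ab$ obtained by adding edge $\{a,b\}$ to $G$ and for an optimal coloring $c_2$ of $G + bc$, and output whichever of $c_1, c_2$ uses fewer colors.

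The heart of the correctness argument is the claim that $\min(\chi(G + ab), \chi(G + bc)) = \chi(G)$. The lower bound is immediate since adding an edge cannot decrease the chromatic number. For the upper bound, suppose for contradiction that both $\chi(G + ab)$ and $\chi(G + bc)$ equal $\chi(G) + 1$. Then every optimal coloring $c^{\ast}$ of $G$ must satisfy $c^{\ast}(a) = c^{\ast}(b)$, since otherwise $c^{\ast}$ itself would be a valid $\chi(G)$-coloring of $G + ab$; likewise $c^{\ast}(b) = c^{\ast}(c)$. Combined, these force $c^{\ast}(a) = c^{\ast}(c)$, contradicting $\{a, c\} \in E(G)$. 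Hence one of $c_1, c_2$ uses exactly $\chi(G)$ colors and, being a valid coloring of a supergraph of $G$, is optimal for $G$. I expect the main conceptual hurdle to be setting up the case split cleanly, in particular showing that the absence of an induced $P_3$ in $\overline{G}$ is equivalent to $G$ being complete multipartite, which is precisely what guarantees the existence of the triple $(a,b,c)$ outside the two easy cases.
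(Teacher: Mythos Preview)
Your proposal is correct, and the two-query argument is exactly the paper's: your triple $(a,c,b)$ is the paper's $(u,v,x)$ --- an edge $\{a,c\}=\{u,v\}$ together with a vertex $b=x$ adjacent to neither endpoint --- and the observation that any optimal coloring of $G$ separates $a$ from $c$ and hence also separates $b$ from at least one of them is identical (the paper states it directly, you phrase it by contradiction).

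Where you genuinely diverge is in the query-free fallback. The paper calls a graph \emph{universal-edged} when no such triple exists and devotes a separate recursive subroutine, \textsc{Subcol}, to coloring such graphs optimally; the recursion picks an edge $\{\ell,r\}$, colors $N[\ell]\setminus N(r)$ and $N[r]\setminus N(\ell)$ with two fresh colors, and recurses on $N(\ell)\cap N(r)$. You instead observe that the absence of such a triple means $\overline{G}$ is $P_3$-free, hence a disjoint union of cliques, hence $G$ is complete multipartite --- and then an optimal coloring is read off directly from the parts. This is strictly simpler: the paper's recursion is in effect peeling off two parts of the multipartite structure per step, whereas you name that structure outright. (The equivalence you flag as the ``main conceptual hurdle'' is the standard fact that a graph is $P_3$-free iff it is a disjoint union of cliques, so this is not a real obstacle.)
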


For the proof of this theorem, we naturally extend the notion of universal vertices as follows.
\begin{definition}
  An edge $\{u,v\}\in E$ of a graph $G=(V,E)$ is called \emph{universal} if, for every vertex $x\in V\setminus\{u,v\}$, we have $\{x,u\}\in E$ or $\{x,v\}\in E$.
  A graph is called \emph{universal-edged} if all its edges are universal.
\end{definition}

Additionally, we denote, for any given graph $G = (V,E)$ and any vertex $x\in V$, the \emph{open neighborhood} of $x$ in $G$ by $N(x)\coloneqq\{y\mid\{x,y\}\in E\}$ and the \emph{closed neighborhood} of $x$ in $G$ by $N[x]\coloneqq N(x)\cup\{x\}$. 
We are now ready to give the proof of \cref{thm:colorer}. 

\begin{proof}[Proof of \cref{thm:colorer}]
  We show that {}\textsc{Colorer} (\cref{alg:maincolorer}), which uses 
  the oracle of our model 
  and {}\textsc{Subcol} (\cref{alg:subcolorer}) as subroutines, has the desired properties.
  
  \begin{algorithm}[htbp]
    \caption{{}\textsc{Colorer}}
    \label{alg:maincolorer}
    \textbf{Input:} An undirected graph $G=(V,E)$.\\
    \textbf{Output:} An optimal coloring for $G$.\\
    \textbf{Description:} Optimizes universal-edged graphs with two queries to {}\textsc{Oracle}, which provides optimal solutions to one-edge-added supergraphs; other graphs are optimized via {}\textsc{Subcol}.
    \begin{algorithmic}[1]
      \For{every edge $\{u,v\}\in E$}
      \For{every vertex $x\in V\setminus\{u,v\}$}
      \If{$\{u,x\}\notin E\,\wedge\,\{v,x\}\notin E$}
      \State $f_1\gets{}\textsc{Oracle}(G\cup\{u,x\})$
      \State $f_2\gets{}\textsc{Oracle}(G\cup\{v,x\})$
      \If{$f_1$ uses fewer colors on $G$ than $f_2$}
      \State \textbf{return} $f_1$ \label{line:returnFirst}
      \Else
      \State \textbf{return} $f_2$ \label{line:returnSecond}
      \EndIf
      \EndIf
      \EndFor
      \EndFor
      \State $k\gets 1$
      \While{${}\textsc{Subcol}(G,k)=\text{NO}$}
      \State $k\gets k+1$
      \EndWhile\vspace*{-.5ex}
      \State \textbf{return} ${}\textsc{Subcol}(G,k)$  \label{line:returnThird}
    \end{algorithmic}
  \end{algorithm}
  \vspace*{-3ex}\begin{algorithm}[htbp]
    \caption{{}\textsc{Subcol}}
    \label{alg:subcolorer}
    \textbf{Input:} An undirected, universal-edged graph $G=(V,E)$ and a positive integer $k$.\\
    \textbf{Output:} A $k$-coloring $f$ for $G$ if there is one; NO if there is none.\\
    \textbf{Description:} Works by recursion on $k$, with $k=1$ and $k=2$ serving as the base cases.    \begin{algorithmic}[1]
      \If{$G$ has no edge}
      \State \textbf{return} the constant 1-coloring with $f(x)=1$ for all $x\in V$.  \label{line:returnOne}
      \ElsIf{$k=1$}
      \State \textbf{return} NO. \label{line:returnTwo}
      \EndIf
      \If{$G$ has bipartition $\{A,B\}$}\vspace*{-2ex}
      \State \textbf{return} the 2-coloring $f(x)=\begin{cases}1&\text{ for }x\in A\text{ and}\\2&\text{ for }x\in B.\end{cases}$\label{line:returnThree}\vspace*{-2ex}
      \ElsIf{k=2}
      \State \textbf{return} NO. \label{line:returnFour}
      \EndIf
      \State Choose an arbitrary edge $\{\ell,r\}\in E$.
      \newlength{\maxwidth}
      \settowidth{\maxwidth}{$L$}
      \State $\makebox[\maxwidth][r]{$L$}\gets N(\ell)\setminus N[r]$;\quad $\makebox[\maxwidth][r]{$R$}\gets N(r)\setminus N[\ell]$;\quad $\makebox[\maxwidth][r]{$M$}\gets N(\ell)\cap N(r)$
      \State $\makebox[\maxwidth][r]{$g$}\gets \mathop{{}\textsc{Subcol}}(G[M],k-2)$
      \If{$g=\text{NO}$}
      \State \textbf{return} NO \label{line:returnFive}\vspace*{-3.5ex}
      \EndIf
      \State \textbf{return} the $k$-coloring $f(x)=
      \begin{cases}
      g(x)&\text{for $x\in M$,}\\
      k-1&\text{for $x\in L\cup\{r\}$, and}\\
      k&\text{for $x\in R\cup\{\ell\}$.}
      \end{cases}$ \label{line:returnSix}
    \end{algorithmic}
  \end{algorithm}
  
  \bigskip
  We begin by proving that {}\textsc{Colorer} is correct.
  Assume first that the input graph $G=(V,E)$ is not universal-edged. 
  Then {}\textsc{Colorer} can find an edge $\{u,v\}\in E$ with a non-neighboring vertex $x\in V$ and query the oracle on $G\cup\{u,x\}$ and $G\cup\{v,x\}$ for optimal colorings $f_1$ and $f_2$. We argue that at least one of them is also optimal for $G$.
  Let $f$ be any optimal coloring of $G$. Since $u$ and $v$ are connected by an edge, we have $f(u)\neq f(v)$ and hence $f(x)\neq f(u)$ or $f(x)\neq f(v)$; see \cref{fig:easyfigure} in \cref{app:moreaboutcolorer}. Thus, $f$ is also an optimal coloring of $G\add\{x,u\}$ or $G\add\{x,v\}$, and so we have $\chi(G)=\chi(G\add\{x,u\})$ or $\chi(G)=\chi(G\add\{x,v\})$.
  Therefore, $f_1$ or $f_2$ is an optimal coloring for $G$ as well and returned on \cref{line:returnFirst} or \ref{line:returnSecond}, respectively.
  
  The while loop can be entered only if the graph $G$ is universal-edged. This allows us to compute an optimal solution to $G$ with no queries at all by using ${}\textsc{Subcol}$ (\cref{alg:subcolorer}). 
  We will show that ${}\textsc{Subcol}$ is a polynomial-time algorithm that computes, for any universal-edged graph $G$ and any positive integer $k$, a $k$-coloring of $G$ if there is one, and outputs NO otherwise. 
  The while loop of ${}\textsc{Colorer}$ thus searches the smallest integer $k$ such that $G$ has a $k$-coloring, that is, $k=\chi(G)$.
  Hence, an optimal coloring of $G$ is returned on \cref{line:returnThird}.
  Due to $k=\chi(G)\le\|V\|$, {}\textsc{Colorer} has polynomial time complexity.
  
  It remains to prove the correctness and polynomial time complexity of {}\textsc{Subcol}. 
  This can be done by bounding its recursion depth and verifying the correctness for each of the six return statements; 
  this is hardest for the last two. The proof relies on the properties of the 
  partition 
  $M\cup L\cup R\cup \{\ell\}\cup \{r\}$ 
  as illustrated in \cref{fig:constructionexample}; 
  see \cref{app:moreaboutcolorer} for all details. 
\end{proof}

In this section, we have proven that \EdgeMinimalKayUnCol is complete for \DP for every $k \geq 3$ and demonstrated that Colorability remains \NP-hard in our model for deletion of vertices or edges, whereas it becomes polynomial-time solvable when the local modification is considered to be the addition of an edge.
In the next section, we turn our attention to Vertex Cover. 

\section{Vertex Cover}\label{sec:VC}

This section will show that the behavior of Vertex Cover in our model 
is distinctly different from the one that we demonstrated for Colorability
in the previous section.
In particular, \cref{thm:removevPoly} proves that computing an optimal
vertex cover from optimal solutions of one-vertex-deleted subgraphs can be done  
in polynomial time, which is impossible for optimal colorings according to \cref{thm:color-v} 
unless $\P=\NP$.

First, we note that the \NP-hardness proof for our most restricted case 
with only one query still works (i.e., \cref{thm:onequerylemma} in \cref{app:onequerylemma} is
applicable): Deleting vertices, adding edges, or
deleting edges repeatedly will always lead to the null graph, 
an edgeless graph, or a clique through polynomially many instances.
As we have seen for Colorability in the previous section, hardness proofs of this type may fail due to exponential branching as soon as multiple queries are allowed. We can show that, unless $\P=\NP$, this is necessarily the case for edge addition and vertex deletion since 
two granted queries suffice to obtain a polynomial-time algorithm.

\begin{theorem}\label{thm:removevPoly}
  There is a polynomial-time algorithm 
  that computes an optimal vertex cover for a graph from two optimal vertex covers for some one-vertex-deleted subgraphs.
\end{theorem}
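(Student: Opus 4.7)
The plan is to exploit a single simple observation: for any edge $\{u,v\}\in E$ of $G$, every vertex cover---and hence every minimum vertex cover---must contain at least one of the endpoints $u$ and $v$. This immediately suggests placing the two allowed oracle queries on the two one-vertex-deleted subgraphs $G-u$ and $G-v$ for some arbitrarily chosen edge of $G$.

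Concretely, the algorithm I would describe is: if $G$ has no edge, return $\emptyset$; otherwise, pick any edge $\{u,v\}\in E$, query the oracle to obtain minimum vertex covers $C_u$ of $G-u$ and $C_v$ of $G-v$, and return the smaller of $C_u\cup\{u\}$ and $C_v\cup\{v\}$ (breaking ties arbitrarily). Both candidates are vertex covers of $G$ by construction: the reinserted endpoint handles all edges incident to it, while the oracle's answer covers every other edge. So only optimality remains to be argued.

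For that, I would first establish the elementary sandwich $\beta(G)-1\le\beta(G-w)\le\beta(G)$ for every vertex $w$, together with the sharper characterization that equality with $\beta(G)-1$ holds precisely when $w$ lies in some minimum vertex cover of $G$. The upper bound follows by extending a minimum cover of $G-w$ by $w$; the lower bound follows by restricting a minimum cover of $G$ to $V\setminus\{w\}$. Combined with the initial observation that at least one of $u,v$ belongs to any fixed minimum vertex cover, this forces $\min(|C_u|,|C_v|)=\beta(G)-1$. Therefore the returned cover has size exactly $\beta(G)$ and is optimal.

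There is no substantive obstacle in this argument; the only subtle point is to recognize that the two queries must be placed at the endpoints of a common edge rather than at two arbitrary vertices. An arbitrarily chosen vertex need not belong to any minimum vertex cover, in which case $\beta(G-w)$ coincides with $\beta(G)$ and the candidate cover $C_w\cup\{w\}$ overshoots by one; forcing the two candidates to share an edge guarantees that at least one of them is useful and is exactly what makes two queries suffice.
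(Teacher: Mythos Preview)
Your proposal is correct and follows essentially the same approach as the paper: pick an arbitrary edge $\{u,v\}$, query the oracle on $G-u$ and $G-v$, and return the smaller cover augmented by the corresponding endpoint, using the fact that at least one endpoint of any edge lies in some minimum vertex cover. Your write-up is slightly more explicit (handling the edgeless case and spelling out the sandwich bound $\beta(G)-1\le\beta(G-w)\le\beta(G)$), but the argument is the same.
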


\begin{proof}
  Observe what can happen when a vertex $v$ is removed from a graph $G$ with an optimal vertex cover of size $k$. If $v$ is part of any optimal vertex cover of $G$, then the size of
  an optimal vertex cover for $G-v$ is $k-1$.
  Given any graph $G$, pick any two adjacent vertices $v_1$ and $v_2$. Since there is an edge between them, one of them is always part of an optimal vertex cover, 
  thus either $G-v_1$ or $G-v_2$ or both will have an optimal vertex cover of size $k-1$. Two queries to the oracle  return optimal vertex covers for $G-v_1$ and $G-v_2$. The algorithm  chooses the smaller of these two covers (or any, if they are the same size) and adds the corresponding $v_i$. The resulting vertex cover has size $k$ and is thus optimal for $G$.
\end{proof}

\Cref{thm:addePoly} in \cref{app:addePoly} proves that the analogous result for adding an edge holds as well.

At this point, we would like to prove either an analogue to
\cref{thm:color-e}, showing that computing an optimal vertex cover 
is $\NP$-hard even if we get access to a solution for every one-edge-deleted
subgraph, or an analogue to \cref{thm:addePoly}, showing that the
problem is in $\P$ if we have access to a solution for more than one
one-edge-deleted subgraph.
We were unable to prove either, however. The latter is easy to do for many
restricted graph classes (e.g., graphs with bridges), 
yet we suspect that the problem is \NP-hard in general.
We will detail a few reasons for the apparent difficulty of proving this
statement after the following theorem and corollary, which look at deleting a
triangle as the local modification.

\begin{theorem}\label{thm:TriangleReduction}
  There is a reduction $g$ from \ThreeSat to \DecVC such that, for every \ThreeCNF-formula $\formula$ and
  for every triangle $T$ in $g(\formula)$, there is a polynomial-time computable
  optimal vertex cover of $g(\formula) - T$.
\end{theorem}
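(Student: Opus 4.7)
The plan is to compose the SAT preprocessing reduction $f$ from \cref{thm:3sat} with the textbook Karp-style reduction $h$ from \ThreeSat to \DecVC via Independent Set, setting $g = h \circ f$. On input a 3-CNF formula $\psi$ with $m$ clauses, $h$ produces the graph whose $3m$ vertices are the literal occurrences $(\clause, \ell)$ with $\ell \in \clause$, and which has an edge between any two occurrences that either share a clause or lie in different clauses with complementary literals; the target size is $2m$. The classical independent-set argument confirms that $g$ is a valid reduction from \ThreeSat to \DecVC, since any independent set uses at most one vertex per clause triangle and the complementary-literal edges enforce a consistent truth assignment.

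Next, I would prove that every triangle in $g(\formula)$ is a clause triangle, relying on the fact that $f$ outputs formulas with exactly three \emph{distinct} literals per clause. For a triangle on vertices $u, v, w$, if any edge---say $(u, v)$---comes from the complementary-literal rule, then $u$ and $v$ lie in different clauses with $\ell_v = \neg \ell_u$. A short subcase analysis on the position of $w$ (in the clause of $u$, in that of $v$, or elsewhere) forces either two distinct vertices of a common clause to carry identical literals (impossible by distinctness) or $\ell_u = \neg \ell_u$ (impossible tout court). Consequently all three edges must be same-clause edges, so $u, v, w$ lie in a common clause and $T$ is a clause triangle. The $m$ clause triangles therefore exhaust all triangles of $g(\formula)$.

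For a triangle $T$ corresponding to clause $\clause$ of $f(\formula)$, I would observe that $g(\formula) - T = h(f(\formula) - \clause)$. By \cref{thm:3sat}, the assignment $\sigma = \sat(f(\formula) - \clause)$ is polynomial-time computable and satisfies $f(\formula) - \clause$. I would turn $\sigma$ into a vertex cover in the standard way: in each of the $m-1$ remaining clause triangles, designate one literal occurrence made true by $\sigma$ as a representative and place the other two vertices of that triangle into the cover. This set has size $2(m-1)$; it covers every clause edge because each surviving triangle contributes two of its three vertices, and it covers every complementary-literal edge because $\sigma$ cannot simultaneously satisfy a literal and its negation, so at least one endpoint of such an edge is a non-representative and hence lies in the cover. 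The two-vertices-per-triangle lower bound matches, so the cover is optimal. The main obstacle is the triangle-classification step: excluding "mixed" triangles built from both same-clause and complementary-literal edges is where the case analysis concentrates, but once that is dispatched the precomputed $\sat$ values from \cref{thm:3sat} make the rest routine.
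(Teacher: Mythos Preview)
Your proposal is correct and follows essentially the same approach as the paper: set $g=h\circ f$ with $f$ the preprocessing reduction of \cref{thm:3sat} and $h$ a standard \ThreeSat-to-\DecVC reduction, observe that triangles in $g(\formula)$ correspond to clauses of $f(\formula)$, and convert the precomputed satisfying assignment $\sat(f(\formula)-\clause)$ into an optimal vertex cover of $g(\formula)-T$. The only difference is cosmetic: the paper cites the Garey--Johnson reduction (variable edges $\{x_i,\overline{x}_i\}$ plus clause triangles, target $n+2m$), whereas you use the Karp/Independent-Set variant (literal-occurrence vertices only, target $2m$); your choice forces you to actually argue that no ``mixed'' triangles arise from complementary-literal edges, a point the paper simply asserts for its gadget, but both instantiations make the key identity $g(\formula)-T=h(f(\formula)-\clause)$ hold and the rest is identical.
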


The proof of \cref{thm:TriangleReduction} relies on the standard reduction from 
\EThreeSat to \DecVC; see~\cite{gar-joh:b:int}, where clauses correspond to triangles; see \cref{app:TriangleReduction} for the details. 
Applying the same argument as in the proof of \cref{thm:color-v} yields the following corollary.

\begin{corollary}\label{cor:removeTrianglePoly}
  Computing an optimal vertex cover for a graph from optimal vertex covers 
  for the one-triangle-deleted subgraphs is \NP-hard.
\end{corollary}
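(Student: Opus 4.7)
The plan is to mimic the structure of the proof of \cref{thm:color-v} and use \cref{thm:TriangleReduction} as the main ingredient, exactly in the way \cref{lem:color-v} was used there. The idea is to show that a polynomial-time algorithm $\mathcal{A}$ that, from optimal vertex covers for all one-triangle-deleted subgraphs of a graph $H$, produces an optimal vertex cover for $H$, can be used as a subroutine to decide \ThreeSat in polynomial time.

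Concretely, I would proceed as follows. Given a \ThreeCNF-formula $\formula$, first compute $g(\formula)$, where $g$ is the reduction from \cref{thm:TriangleReduction}. Then enumerate all triangles $T$ in $g(\formula)$ (there are only polynomially many since $g(\formula)$ has polynomial size) and, for each such $T$, compute an optimal vertex cover of $g(\formula) - T$ in polynomial time using the function guaranteed by \cref{thm:TriangleReduction}. Feeding these optimal solutions to the hypothetical algorithm $\mathcal{A}$ yields an optimal vertex cover for $g(\formula)$, and in particular its size $\beta(g(\formula))$. Since $g$ is a reduction from \ThreeSat to \DecVC, comparing $\beta(g(\formula))$ to the associated threshold decides whether $\formula$ is satisfiable. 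Every step runs in polynomial time, so $\mathcal{A}$ being polynomial-time would place \ThreeSat in \P, which proves the \NP-hardness claim.

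There is no real obstacle here; the heavy lifting is already done by \cref{thm:TriangleReduction}, whose whole point is to manufacture an instance for which the advice on all one-triangle-deleted subgraphs is computable without knowing anything nontrivial about~$\formula$. The only thing worth double-checking in a careful write-up is that the triangles of $g(\formula)$ can be enumerated in polynomial time (trivially true, since one just iterates over all vertex triples) and that the threshold used to decide \DecVC in the standard reduction depends only on the input size of $\formula$, so that it can be recomputed from $\formula$ without consulting $\mathcal{A}$. Both are immediate from the standard clause-gadget reduction underlying \cref{thm:TriangleReduction}. Hence the corollary follows by the same one-line argument as \cref{thm:color-v}.
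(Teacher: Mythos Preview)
Your proposal is correct and follows exactly the paper's own approach: the paper simply states that applying the same argument as in the proof of \cref{thm:color-v} yields the corollary, and you have spelled out precisely that argument with \cref{thm:TriangleReduction} playing the role of \cref{lem:color-v}.
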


What can we say about optimal vertex covers for one-edge-deleted graphs? 
Papadimitriou and Wolfe show~\cite[Theorem 4]{pap-wol:j:facets} that there is a reduction $g$ from 
\MinimalThreeUnSat to \MinimalNonVC (called $\textsc{Critical}\textsc{-}\textsc{Vertex}\-\textsc{Cover}$ in \cite{pap-wol:j:facets}; asking, 
given a graph $G$ and an
integer $k$, whether $G$ does not have a vertex cover of size
$k$ but all one-edge-deleted subgraphs do).
The reduction builds in a polynomial-time computable vertex
cover of size $k$ for every one-edge-deleted subgraph. And so 
$g$ is a reduction from \EThreeSat to \DecVC such that there
exists a polynomial-time computable function $\opt$ such that
for every \EThreeCNF-formula $\formula$
and $g(\formula) = (G,k)$, it holds, for every edge $e$ in $G$, that $\opt(G - e)$
is a vertex cover of size $k$.
Unfortunately, it may happen that 
an optimal vertex cover of $G - e$ has size $k-1$; 
namely, if $e$ is an edge connecting two triangles, 
an edge between two variable-setting vertices, or  
any edge of the clause triangles. 
The function $\opt$ does thus not give us an optimal vertex cover, 
thwarting the proof attempt.
This shows that we cannot always get results for our model 
from the constructions for criticality problems.

The following would be one approach to design a polynomial-time algorithm that computes an optimal vertex cover from optimal vertex covers for all one-edge-deleted subgraphs:
It is clear that deleting an edge does not increase the size of an optimal vertex cover and decreases it by at most one.
If, for any two neighbor graphs, the provided vertex covers differ in size, then we can take the smaller one, restore the deleted edge, and add any one of the two incident vertices to the vertex cover; this gives us the desired optimal vertex cover.
If the optimal vertex cover size decreases for all deletions of a single edge, we can do the same with any of them.
Thus, it is sufficient to design a polynomial-time algorithm that solves the problem on graphs whose one-edge-deleted subgraphs all have optimal vertex covers of the same size as an optimal vertex cover of the original graph.
One might suspect that only very few and simple graphs can be of this kind. However, we obtain infinitely many such graphs by the removal of any edge from different cliques, as already mentioned in the introduction.
In fact, there is a far larger class of graphs with this property and no apparent communality to be exploited for the efficient construction of an optimal vertex cover. 

We now turn to our complexity results of $\beta$-(vertex-)criticality. 
The reduction from 
\MinimalThreeUnSat to \MinimalNonVC by Papadimitriou and Wolfe~\cite{pap-wol:j:facets} 
establishes the \DP-hardness of deciding whether a graph
is $\beta$-critical. 
However, it seems unlikely that $\beta$-criticality is
in \DP. The obvious upper bound is $\Theta_2^\textnormal{p}$, since a polynomial
number of queries to a \DecVC oracle, namely $(G,k)$ and $(G-e,k)$ for all
edges $e$ in $G$ and all $k \leq \| V(G) \|$,
in parallel allows us to determine $\beta(G)$ and $\beta(G - e)$ for all 
edges $e$ in polynomial time, and thus allows us to determine
whether $G$ is $\beta$-critical. While we have not succeeded in proving
a matching lower bound, or even any lower bound beyond
\DP-hardness, we do get this lower bound for 
$\beta$-vertex-criticality, thereby obtaining the first
$\Theta_2^\textnormal{p}$-completeness result for a criticality problem.

\begin{theorem}\label{thm:VertexCriticalThetaComplete}
  Determining whether a graph is $\beta$-vertex-critical
  is  $\Theta_2^\textnormal{p}$-complete.
\end{theorem}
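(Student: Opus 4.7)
My plan is to prove membership and hardness separately. Membership in $\Theta_2^\textnormal{p}$ has already been argued in the paragraph preceding the theorem: one queries the $\NP$-oracle $\DecVC$ in parallel on $(G,k)$ and $(G-v,k)$ for every vertex $v$ and every $k\le\|V(G)\|$, thereby recovering $\beta(G)$ and each $\beta(G-v)$, and accepts iff $\beta(G-v)=\beta(G)-1$ for every vertex $v$.

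For $\Theta_2^\textnormal{p}$-hardness I would invoke Wagner's standard technique~\cite{wag:j:bounded}: it suffices to exhibit a polynomial-time map $f$ and an $\NP$-complete language $L$ such that for every \emph{monotone} sequence $(x_1,\dots,x_{2m})$, meaning $x_i\in L\Rightarrow x_{i-1}\in L$, the graph $f(x_1,\dots,x_{2m})$ is $\beta$-vertex-critical iff $|\{i:x_i\in L\}|$ is odd. Under monotonicity, this parity is odd precisely when there is a unique ``transition'' pair, i.e., an index $i$ with $x_{2i-1}\in L$ but $x_{2i}\notin L$; so the problem reduces to detecting such a transition via $\beta$-vertex-criticality.

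The construction of $G=f(x_1,\dots,x_{2m})$ would start from the Papadimitriou--Wolfe reduction from \MinimalThreeUnSat to \MinimalNonVC~\cite{pap-wol:j:facets}, which already underlies the \DP-hardness of $\beta$-criticality. For each pair $(x_{2i-1},x_{2i})$ I would build a gadget $P_i$ with a designated pivot vertex $w_i$ such that $w_i$ lies in some minimum vertex cover of $P_i$ iff the pair is mixed, while every non-pivot vertex of $P_i$ is padded with fresh triangles or cliques so as to already lie in some minimum vertex cover of $P_i$ unconditionally. The gadgets $P_1,\dots,P_m$ are then linked through a single global pivot vertex $v^\ast$ via an OR-gadget so that $v^\ast$ lies in some minimum vertex cover of $G$ iff at least one $w_i$ does, while the criticality of every other vertex of $G$ is preserved regardless of the inputs.

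The main obstacle is engineering this OR-gadget together with the size bookkeeping so that $\beta(G)$ decomposes additively into contributions from the $P_i$'s and the OR-gadget, and so that every non-pivot vertex is certified to lie in some minimum vertex cover of $G$ regardless of the status of the pairs. In particular, I anticipate needing a careful simultaneous size-counting argument across all $m$ blocks---more intricate than but analogous to the verification used in the proof of \cref{thm:MU-DP-complete}---to guarantee that the pivot status of any individual $w_i$ is faithfully transmitted to the global pivot $v^\ast$ without any unintended non-criticality arising from the padding or from the cross-gadget interactions.
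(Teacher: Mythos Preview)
Your membership argument matches the paper's and is correct.

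For hardness, your proposal follows the raw Wagner parity technique but leaves the central construction---the OR-gadget linking the pivots $w_i$ to a global pivot $v^\ast$ so that $v^\ast$ fails to be critical exactly when every pair is homogeneous---unbuilt. You explicitly flag this as ``the main obstacle,'' and it is a real one: arranging that every non-pivot vertex of $G$ unconditionally lies in some minimum vertex cover while the pivot does so iff some pair is mixed requires delicate global size-balancing across all $m$ blocks, and nothing in the Papadimitriou--Wolfe gadget gives this for free. As it stands, the proposal is a plan with an acknowledged hole rather than a proof.

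The paper sidesteps the obstacle entirely by reducing not from a monotone sequence of $\NP$ instances but from the already-known $\Theta_2^\textnormal{p}$-complete problem $\textsc{VC}_= = \{(G,H)\mid\beta(G)=\beta(H)\}$~\cite{wag:j:more-on-bh}. The construction is essentially one line: pad $G$ and $H$ with isolated vertices to a common size $n+1$ and take the graph join $F=(G\cup G')+(H\cup H')$. Then $\beta(F)=(n+1)+\min(\beta(G),\beta(H))$; if $\beta(G)=\beta(H)$ every vertex deletion drops $\beta(F)$ by one, whereas if $\beta(G)<\beta(H)$ deleting an isolated padding vertex on the $G$-side leaves $\beta(F)$ unchanged. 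No OR-gadget, no pivots, no cross-block bookkeeping. The Wagner machinery you are trying to rebuild has already been packaged into the $\Theta_2^\textnormal{p}$-completeness of $\textsc{VC}_=$; reducing from there collapses the hardness argument to a few lines.
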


\begin{proof}
Membership follows with the same argument as above, this time
querying the oracle \DecVC in parallel for all 
$(G,k)$ and $(G-v,k)$ for all vertices $v$ in $G$ and all
$k \leq \| V(G) \|$.
To show that this problem is $\Theta_2^\textnormal{p}$-hard, we use a similar
reduction as the one by Hemaspaandra et al.~\cite[Lemma 4.12]{hem-spa-vog:j:kemeny}
to 
prove that it is $\Theta_2^\textnormal{p}$-hard to determine whether a given vertex is
a member of a minimum vertex cover.
We reduce from
the $\Theta_2^\textnormal{p}$-complete problem 
$\textsc{VC}_= = \{(G,H)\,\mid\,\beta(G) = \beta(H)\}$~\cite{wag:j:more-on-bh}.
Let $n = \max(\|V(G)\|,\|V(H)\|)$,
let $G'$ consist of $n+1-\|V(G)\|$ isolated vertices, 
let $H'$ consist of $n+1-\|V(H)\|$ isolated vertices, and
let $F = (G \cup G') + (H \cup H')$.
Note that $\beta(F) =
(n+1) + \min(\beta(G),\beta(H))$.
If $\beta(G) = \beta(H)$,
then $\beta(F) = (n+1) + \beta(G) = (n+1) + \beta(H)$ and for every vertex $v$
in $F$, $\beta(F - v) = n + \beta(G)$.
Thus, $F$ is critical.
If $\beta(G) \neq \beta(H)$, assume without loss of generality that
$\beta(G) <  \beta(H)$. Then $\beta(F) = n + 1 + \beta(G)$.
Let $v$ be a vertex in $G'$.
Then $\beta(F - v) = \min(n + 1 + \beta(G),
n + \beta(H)) = n + 1 + \beta(G)$, and therefore $F$ is not critical.
\end{proof}

\section{Conclusion and Future Research} \label{sec:conclusion}

We defined a natural model 
that provides new insights into the structural properties of \NP-hard problems. 
Specifically, we revealed interesting differences 
in the behavior of Colorability and Vertex Cover 
under different types of local modifications. 
While Colorability remains \NP-hard when the local modification is the deletion of either a vertex or an edge, 
there is an algorithm that finds an optimal coloring by querying the oracle on at most two edge-added supergraphs.
Vertex Cover, in contrast, becomes easy in our model for both deleting vertices and adding edges, as soon as two queries are granted. 
The question of what happens for the local modification of deleting an edge remains as an intriguing open problem that defies any simple approach, as briefly outlined above. 
Moreover, examples of problems where one can prove a jump from membership in \P to \NP-hardness at a given number of queries greater than $2$ might be especially instructive.

With its close connections to many distinct research areas, 
most notably the study of self-reducibility and critical graphs, 
our model can serve as a tool for new discoveries.
In particular, we were able to exploit the tight relations to 
criticality in the proof that recognizing 
$\beta$-vertex-critical graphs 
is $\Theta_2^\textnormal{p}$-hard, yielding the first 
completeness result for $\Theta_2^\textnormal{p}$ in the field.

\section*{Acknowledgments} \label{sec:acknowlegdements}

We thank the anonymous referees and Hans-Joachim Böckenhauer, Rodrigo R.\ Gumucio Escobar, Lane Hemaspaandra, Juraj Hromkovi\v{c}, Rastislav Kr\/{a}lovi\v{c}, Richard Kr\/{a}lovi\v{c}, 
Xavier Muñoz, Martin Raszyk, Peter Rossmanith, Walter Unger, and Koichi Wada for helpful comments and discussions.

\newpage

\begin{appendix}

\section{Tractability for Vertex Addition}\label{app:addvertextheorem}

We show that Vertex Cover and Colorability are trivially tractable if we are allowed to query an oracle for an optimal solution to the input graph with one vertex added. 

\begin{theorem}\label{thm:addvertextheorem}
  Under our model with the local modification of adding a vertex, the two problems of finding an optimal vertex cover and finding an optimal coloring are in \P.
\end{theorem}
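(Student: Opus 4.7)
The plan is to invoke the oracle on a single cleverly chosen neighbor of the input graph, namely the graph obtained by adding an \emph{isolated} vertex. Recall that, by the definition of our model, the local modification ``add a vertex'' allows the new vertex to come with any (possibly empty) set of edges to the existing vertices. In particular, attaching it with the empty edge set is a legal neighbor that we are free to query.

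Concretely, on input $G=(V,E)$, the algorithm constructs $G'=(V\cup\{v\},E)$ for a fresh vertex $v\notin V$ and asks the oracle for an optimal solution to $G'$. For Colorability, since $v$ is isolated, we have $\chi(G')=\chi(G)$: any $k$-coloring of $G$ extends to $G'$ by assigning $v$ any used color, and any $k$-coloring of $G'$ restricts to a $k$-coloring of $G$. Hence, if $f'$ is the optimal coloring returned by the oracle, its restriction $f'|_V$ is an optimal coloring for $G$, computable from $f'$ in linear time. For Vertex Cover, an isolated vertex covers no edges, so $\beta(G')=\beta(G)$: any vertex cover of $G$ is a vertex cover of $G'$ and vice versa (since the set of edges is the same). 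Thus, if $C'$ is the optimal vertex cover returned by the oracle, then $C'\cap V$ is still a vertex cover of $G$ of size at most $|C'|=\beta(G')=\beta(G)$, hence optimal for $G$.

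In both cases the algorithm performs a single oracle query, a constant-time modification of the input, and a linear-time postprocessing of the oracle's answer, so it runs in polynomial time. There is no real obstacle here; the only point worth emphasizing is that the ``add a vertex'' modification, as defined in \cref{sec:model}, is permissive enough to allow the added vertex to have degree zero, which is what makes the reduction to the original instance essentially free.
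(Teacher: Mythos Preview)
Your proof is correct and follows essentially the same approach as the paper: query the oracle on the graph with an added isolated vertex and restrict the returned solution to the original vertex set. The paper additionally remarks that one could alternatively add a universal vertex (which must receive a unique color and can be assumed to lie in any optimal vertex cover), but this is only a side comment, not a different argument.
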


\begin{proof}
  Let $G$ be the given graph. Add an isolated vertex $v$ and query the oracle for an optimal vertex cover or an optimal coloring of $G+v$, respectively. 
  Clearly, the restriction of this solution to $G$ is optimal as well since an isolated vertex is never part of an optimal vertex cover and can be colored arbitrarily. (Note that we could also avoid isolated vertices by adding a universal vertex instead since a universal vertex needs to be colored  
  in a unique color and is, without loss of generality, part of any optimal vertex cover.)
\end{proof}

\section{Hardness Results for Restriction to a Single Query}\label{app:onequerylemma}

In the most restricted case of our model, where we grant the algorithm but one single query for an optimal solution of a neighboring (i.e., locally modified) instance, all considered problems preserve their \NP-hardness.
The proof for this is simple enough and based on a technique that is commonly applied in reoptimization, self-reducibility, and many other fields. We formulate the following theorem as a generalization of Lemma 1 by~Böckenhauer et al.~\cite{boe-hro-etal:hardness-of-reoptimization}.

\begin{theorem}\label{thm:onequerylemma}
  Let $\OptProb$ be an optimization problem. 
  Let $T$ be a set of efficiently solvable instances of $\OptProb$.
  (Or, to be more precise: Let $T\subseteq\Sigma_1$ be a subset of instances 
  for the problem $\OptProb\subseteq\Sigma_1\times\Sigma_2$ such 
  that there is a polynomial-time algorithm that computes an optimal 
  solution on every instance of $T$.)
  Let the considered local modification be such that applying arbitrary
  local modifications repeatedly will inevitably transform any instance of
  $\OptProb$ into an instance in $T$ in a polynomial number of steps. 
  Then $\OptProb$ is \NP-hard in our model with restriction to one query.
\end{theorem}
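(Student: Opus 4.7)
The plan is to reduce the standard (oracle-free) version of $\OptProb$ to its one-query form in polynomial time, so that the \NP-hardness of the former is inherited by the latter. Concretely, suppose for contradiction that there were a polynomial-time algorithm $A$ that solves $\OptProb$ in the one-query model; I would use $A$ to build a polynomial-time algorithm $B$ for $\OptProb$ with no oracle at all, forcing $\P = \NP$ whenever the theorem is applied to a problem whose unaided form is known to be \NP-hard.

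The algorithm $B$ is defined recursively. On input $x$, it first tests whether $x \in T$; if so, it outputs an optimal solution directly by the assumption on $T$. Otherwise, it simulates $A$ on $x$ until $A$ issues its single oracle query on some neighbor $x'$ of $x$, then recursively invokes itself on $x'$ to obtain an optimal solution $y'$, supplies $y'$ to $A$ as the oracle's reply, and finally returns $A$'s output. Correctness follows by induction on recursion depth: the base case holds since $T$-instances are solved directly, and in the inductive step $A$'s specification guarantees that the returned answer is optimal for $x$ whenever the oracle's reply is optimal for the queried neighbor.

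Termination and polynomial runtime rest on the theorem's key hypothesis. The sequence of instances visited by the recursion, $x, x', x'', \dots$, is precisely a chain generated by successive applications of the local modification, with each transition determined by $A$'s query choice; by assumption, every such chain reaches $T$ within a polynomial number of steps. The recursion therefore has polynomial depth, and since each level performs only the polynomial-time work of $A$ on an instance of size polynomially bounded in $|x|$, the total running time is polynomial.

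The main obstacle to watch is that the recursion depth must remain polynomial \emph{regardless of which neighbor $A$ elects to query}, which is exactly what the universal chain-length hypothesis on local modifications delivers: no adversarial choice by $A$ can prolong the chain past $T$. A secondary point to verify is that the encoding length of the instances along the chain stays polynomial in $|x|$; this is immediate for the local modifications considered in the paper, since each of them alters the instance size by at most a constant.
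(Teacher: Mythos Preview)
Your proposal is correct and follows essentially the same approach as the paper: both arguments show a Turing reduction from the oracle-free problem to the one-query version by iterating the assumed one-query algorithm along a chain of locally modified instances until an instance in $T$ is reached, solving that instance directly, and then propagating optimal solutions back. The paper phrases this iteratively (build the chain $I_0,I_1,\ldots,I_n$ first, then work back from $I_n\in T$), whereas you phrase it recursively, but the underlying reasoning is identical; your explicit remarks on why the recursion depth and instance sizes stay polynomial are points the paper leaves implicit.
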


\begin{proof}
  We give a reduction from $\OptProb$ in the classical setting to $\OptProb$ in our model.
  Let $\alg$ be a polynomial-time algorithm that computes on instance $I_i$
  a locally modified instance $I_{i+1}$ and uses an optimal solution to $I_{i+1}$ 
  to compute an optimal solution for $I_i$.
  For any given instance $I_0$ of $\OptProb$, 
  we thus get a chain $I_0,I_1,\ldots,I_n$ of polynomial length 
  such that $I_n$ is in $T$. 
  We can efficiently compute an optimal solution to $I_n\in T$ and then 
  use $\alg$ to successively compute optimal solutions to
  $I_{n-1},\ldots,I_1,I_0$ in polynomial time. 
\end{proof}

\section{Details of the Proof of \texorpdfstring{\cref{thm:3sat}}{Theorem~\ref{thm:3sat}}}\label{sec:satselfreduction}

This section goes through the reduction from \ThreeUnSat to \MinimalUnSat
by Papdimitriou and Wolfe~\cite{pap-wol:j:facets} (mostly following the
notation of Büning and Kullmann~\cite{kle-kul:b:handbook-sat-minunsat}) plus three standard reductions for additional form constraints, showing that the composition has all the properties that we need for our results.

\subsection{The main reduction: \ThreeUnSat to \MinimalUnSat}

Let $\Phi\in\EThreeCNF$ be a Boolean formula over the variable set $\{x_1,\ldots,x_n\}$, $n>1$; that is, $\Phi=C_1\wedge\ldots\wedge C_m$ with $C_i=\ell_{i,1}\vee \ell_{i,2}\vee \ell_{i,3}$ and $\ell_{i,j}\in\{x_1,\ldots,x_n,\overline{x}_1,\ldots,\overline{x}_n\}$, where an overline denotes negation. 
In the following, we construct in polynomial time an equivalent \CNF-formula $\Psi$ with the additional property that each one-clause-deleted subformula
of $\Psi$ has an easy-to-compute satisfying assignment.
First, delete without replacement any clause that contains a variable and its negation. (Such a clause is satisfied for every assignment.) Assume thus without loss of generality that no clause $C_i$ contains a variable and its negation.
Now introduce $m$ new variables $\{y_1,\ldots,y_m\}$ and let $\pi_i\coloneqq y_1\vee\ldots\vee y_{i-1}\vee y_{i+1}\vee\ldots\vee y_m$.
Let 
\[\Psi=\bigwedge_{i=1}^{m}(C_i\vee\pi_i)\wedge\bigwedge_{i=1}^{m}\bigwedge_{j=1}^3(\overline{\ell}_{i,j}\vee\pi_i\vee\overline{y}_i)\wedge\!\!\!\bigwedge_{1\le i<j\le m}\!\!\!(\overline{y}_i\vee\overline{y}_j).\]
This concludes the description of the construction. We now prove all the required properties. 

\subsubsection{\texorpdfstring{$\Phi$}{Phi} satisfiable \texorpdfstring{$\Rightarrow$}{iff} \texorpdfstring{$\Psi$}{Psi} satisfiable}

Let $\alpha$ be a satisfying assignment for $\Phi$. Then
\[
\beta\colon\begin{cases}
x_i\mapsto\alpha(x_i),&\text{ for all }i,\text{ and}\\
y_i\mapsto 0,&\text{ for }y\in Y, 
\end{cases}
\]
is a satisfying assignment for $\Psi$.
Indeed, we have $\beta(C_i\vee\pi_i)=1$ since $\alpha(C_i)=1$; and we have $\beta(\overline{\ell}_{i,j}\vee\pi_i\vee\overline{y}_i)=1$ and $\beta(\overline{y}_i\vee\overline{y}_j)=1$ since  $\beta(\overline{y}_i)=1$.

\subsubsection{\texorpdfstring{$\Psi$}{Psi} satisfiable \texorpdfstring{$\Rightarrow$}{iff} \texorpdfstring{$\Phi$}{Phi} satisfiable}

Let $\beta$ be a satisfying assignment for $\Psi$. We prove that $\beta$ (or, to be precise, the restriction $\beta|_X$) also satisfies $\Phi$. The satisfied clauses of the form $\overline{y}_i\vee\overline{y}_j$ in $\Psi$
guarantee that $\beta(\overline{y}_{\ihat})=0$ for at most one $\ihat\in\{1,\ldots,m\}$. 

\begin{description}
  \item[Case 1:]
  Assume that $\beta(\overline{y}_i)=1$
  for all $i\in\{1,\ldots,m\}$.
  All clauses that contain a literal $\overline{y}_i$ are clearly satisfied.
  The only remaining clauses have the form $C_i\vee\pi_i$. Moreover, $\beta(y_i)=0$ for all $i\in\{1,\ldots,m\}$ implies $\beta(\pi_i)=0$ for all $i\in\{1,\ldots,m\}$.
  Thus, we have $\beta(C_i\vee\pi_i)=1$ if and only if $\beta(C_i)=1$. By assumption $\beta$ satisfies all clauses of $\Psi$, in particular those of the form $C_i\vee\pi_i$; therefore $\beta$ also satisfies $\Phi$.
  
  \item[Case 2:]
  Assume that $\beta(\overline{y}_{\ihat})=0$, that is, $\beta(y_{\ihat})=1$ for exactly one $\ihat\in\{1,\ldots,m\}$.
  Then $\beta(\pi_{\ihat})=0$ and $\beta(\pi_i)=1$ for the remaining $i\neq\ihat$.
  Thus, all clauses that contain $y_i$ or $\pi_i$ with $i\neq\ihat$ are trivially satisfied.
  The only four remaining clauses are
  \[(C_{\ihat}\vee\pi_{\ihat})\wedge\bigwedge_{j=1}^3(\overline{\ell}_{\ihat,j}\vee\pi_{\ihat}\vee\overline{y}_{\ihat}),\]
  which, due to $\beta(\pi_{\ihat})=\beta(\overline{y}_{\ihat})=0$, simplify to
  $C_{\ihat}\wedge\overline{\ell}_{\ihat,1}\wedge\overline{\ell}_{\ihat,2}\wedge\overline{\ell}_{\ihat,3}$.
  This is unsatisfiable since $C_{\ihat}=\ell_{\ihat,1}\wedge \ell_{\ihat,2}\wedge \ell_{\ihat,3}$. Thus, case 2 cannot occur.
\end{description}

\subsubsection{\texorpdfstring{$\Psi$}{Psi} is satisfiable after deletion of an arbitrary clause}

There are three cases, which we handle separately.\vspace{1ex}
\begin{description}  
  \item[Case 1: The deleted clause is $\overline{y}_\ihat\vee\overline{y}_\jhat$.]
  We show that in this case, the following assignment is satisfying:
  \[
  \beta\colon\begin{cases}
  y_\ihat\mapsto 1,&\\
  y_\jhat\mapsto 1,&\\
  y_i\mapsto 0,&\text{ for }\ihat\neq i\neq \jhat,\text{ and}\\
  x_i\mapsto\text{arbitrary}.
  \end{cases}
  \]
  
  We have $\beta(\pi_i)=1$ for all $i\in\{1,\ldots,m\}$ since any $\pi_i$ contains either $y_\ihat$ or $y_\jhat$.
  The remaining clauses $\overline{y}_i\wedge\overline{y}_j$ with $(i,j)\neq(\ihat,\jhat)$ are trivially satisfied.\vspace{1ex}
  
  \item[Case 2: The deleted clause is $C_\ihat\vee\pi_\ihat$.]
  In this case, the assignment
  \[
  \beta\colon\begin{cases}
  y_\ihat\mapsto 1,&\\
  y_i\mapsto 0,&\text{ for }i\neq\ihat,\\
  x_i\mapsto 1,&\text{ for }x_i\in\{\ell_{\ihat,1},\ell_{\ihat,2},\ell_{\ihat,3}\},\\
  x_i\mapsto 0,&\text{ for }x_i\in\{\overline{\ell}_{\ihat,1},\overline{\ell}_{\ihat,2},\overline{\ell}_{\ihat,3}\},\text{ and}\\
  x_i\mapsto\text{arbitrary},&\text{ otherwise,}
  \end{cases}
  \]
  is satisfying. All clauses of the form $\overline{y}_i\vee\overline{y}_j$ are satisfied since only $y_\ihat$ is assigned 1 and $i\neq j$. We also have $\beta(\pi_i)=1$ for all $i\neq \ihat$, so all clauses containing $\pi_i$ for $i\neq\ihat$ are satisfied. Since $C_\ihat\vee\pi_\ihat$ is deleted, the only three remaining clauses are $\overline{\ell}_{\ihat,j}\vee\pi_\ihat\vee\overline{y}_\ihat$ for $j\in\{1,2,3\}$.
  These are satisfied because $\beta(\overline{\ell}_{\ihat,j})=1$ for $j\in\{1,2,3\}$. (Such an assignment is valid since no clause $C_i$ contains a variable and its negation, as mentioned in the first paragraph; in particular $C_\ihat$, that is, $\{\ell_{\ihat,1},\ell_{\ihat,2},\ell_{\ihat,3}\}\cap\{\overline{\ell}_{\ihat,1},\overline{\ell}_{\ihat,2},\overline{\ell}_{\ihat,3}\}=\emptyset$.)\vspace{1ex}
  
  \item[Case 3: The deleted clause is $\overline{\ell}_{\ihat,\jhat}\vee\pi_\ihat\vee\overline{y}_\ihat$.]
  Also in this case, the assignment
  \[
  \beta\colon\begin{cases}
  y_\ihat\mapsto 1,&\\
  y_i\mapsto 0,&\text{ for }i\neq\ihat,\\
  x_i\mapsto 1,&\text{ for }x_i\in\{\ell_{\ihat,\jhat}\}\cup\{\overline{\ell}_{\ihat,j}\mid j\neq\jhat\},\\
  x_i\mapsto 0,&\text{ for }x_i\in\{\overline{\ell}_{\ihat,\jhat}\}\cup\{\ell_{\ihat,j}\mid j\neq\jhat\},\\
  x_i\mapsto\text{arbitrary},&\text{ otherwise,}
  \end{cases}
  \]
  is satisfying.
  The same argument as in Case~2 shows that the assignment to  $y_{\ihat}$ satisfies all clauses but  the three clauses $C_\ihat$ and $\overline{\ell}_{\ihat,j}\vee\pi_\ihat\vee\overline{y}_\ihat$ for $j\in\{1,2,3\}\setminus\{\jhat\}$. The clause $C_\ihat$ is satisfied because $\beta(\ell_{\ihat,\jhat})=1$; the other two are satisfied due to $\beta(\ell_{\ihat,j})=1$ for $j\neq \jhat$.
\end{description}

\subsection{Additional Form Constraints}

\subsubsection{\CNF to \TwoOrThreeCNF}\label{sec:mureduction}

\TwoOrThreeCNF is the set of all $\CNF$-formulas with exactly two or three literals in every clause.
\paragraph*{Construction.}
$\Psi$ can be replaced by an equivalent
\TwoOrThreeCNF-formula $\Psi'$
while retaining the property that 
each one-clause-deleted subformula
has an easy-to-compute satisfying assignment.
We can use the standard reduction which replaces a clause $C_i=\ell_{i,1}\vee\ldots\vee \ell_{i,\|C_i\|}$ by
\[\underbrace{(\ell_{i,1}\vee z_{i,1})}_{C_{i,1}}\wedge\underbrace{(\overline{z}_{i,1}\vee \ell_{i,2}\vee z_{i,2})}_{C_{i,2}}\wedge\ldots\wedge\underbrace{(\overline{z}_{|C_i|-2}\vee \ell_{i,\|C_i\|-1}\vee z_{i,\|C_i\|})}_{C_{i,\|C_i\|-1}}\wedge\underbrace{(z_{i,\|C_i\|}\vee \ell_{\|C_i\|})}_{C_{i,\|C_i\|}},\]
where $z_{i,1},\ldots,z_{i,\|C_i\|}$ are $\|C_i\|$ new variables.

\paragraph*{Equivalence.}
$\Psi$ and $\Psi'$ are equivalent because we can use the assignment of truth values to $z_{i,1},\ldots,z_{i,\|C_i\|}$ to satisfy all but an arbitrary one of the substituted clauses above.

\paragraph*{Easy-to-compute satisfying assignments for one-clause-deleted subformulas.}
Deleting a clause $C_{\ihat,\jhat}$ from $\Psi'$ corresponds to the deletion of the clause $C_{\ihat}$ from $\Psi$ because
the clauses $C_{\ihat,j}$ with $j\neq\jhat$ can always be satisfied by 
assigning 1 to the variables $z_{\ihat,1},\ldots,z_{\ihat,\jhat-1}$ and 0 to the variables $z_{\ihat,\jhat+1},\ldots,z_{\ihat,\|C_\ihat\|}$.
  
\subsubsection{\EThreeCNF to \ThreeOccTwoOrThreeCNF}

\ThreeOccTwoOrThreeCNF is the set of \EThreeCNF-formulas where each variable occurs at most once in each clause and at most three times in the entire formula.

\paragraph*{Construction.}
Let $\Phi$ be a \EThreeCNF-formula over $\{x_1,\ldots,x_n\}$. Assume that $x_1$ occurs in $\Phi$ a total of $a$ times in the affirmative and $b$ times negated.
We replace the $a$ affirmative occurrences by $x_{1,1},\ldots,x_{1,a}$ and the $b$ negated occurrences by $\overline{x}_{1,a+1},\ldots,\overline{x}_{1,a+b}$.
Moreover, we add the following new clauses:
\[(\overline{x}_{1,1}\vee x_{1,2})\wedge\ldots\wedge (\overline{x}_{1,a+b-1}\vee x_{1,a+b}).\]
Repeating this for $x_2,\ldots,x_n$ results in a formula $\Psi$.
Observe that the added clauses are equivalent to the implication chain
\[x_{1,1}\Rightarrow x_{1,2}\Rightarrow\ldots\Rightarrow x_{1,a+b-1}\Rightarrow x_{1,a+b}\text{ for all $i\in\{1,\ldots,m\}$}.\]
We repeat this construction for $x_2,\ldots,x_n$ and obtain our formula $\Psi$. Now, we show that $\Psi$ is equivalent to $\Phi$.

\paragraph*{Correctness.}
Given a satisfying assignment $\alpha$ for $\Phi$, the assignment 
$\beta\colon
x_{i,j}\mapsto \alpha(x_i)
$
trivially satisfies the constructed formula~$\Psi$.
For the converse, assume there is a satisfying assignment $\beta$ for $\Psi$. 
We prove that the modified assignment  
$\beta'(x_{i,j})=\beta(x_{i,a})$ for all $j\in\{1,\ldots,a+b\}$ 
also satisfies $\Psi$. Obviously, $\beta'$ satisfies the implication chains since there is no dependence on $j$. To see that the other clauses are satisfied as well, consider the two possible assignments for $x_{i,a}$.
\begin{description}
  \item[Case 1.] If $\beta(x_{i,a})=1$, then $\beta(x_{i,j})=1$ for all $j\ge a$ by the implication chain. These variables are also assigned 1 by $\beta'$, which has $\beta'(x_{i,j})=1$ for all $j$. Thus $\beta'$ can only differ from $\beta$ on the variables $x_{i,j}$ with $j<a$. These are the positively occurring variables and $\beta'$ assigns 1 to all of them. Therefore, the changes to the assignment keep all the satisfied clauses satisfied.\vspace{1ex}
  
  \item[Case 2.] If $\beta(x_{i,a})=0$, then $\beta(x_{i,j})=0$ for all $j\le a$ by the contrapositive of the implication chain. These variables are also assigned 0 by $\beta'$, which has $\beta'(x_{i,j})=0$ for all $j$. Thus $\beta'$ can only differ from $\beta$ on the variables $x_{i,j}$ with $j>a$. These are the negatively occurring variables and $\beta'$ assigns 0 to all of them. As before, we conlcude that none of the changes to the assignment  renders any satisfied clause unsatisfied.\vspace{1ex}
\end{description}
Now, we trivially obtain from $\beta'$ a satisfying assignment for $\Phi$. 
\paragraph*{Easy-to-compute satisfying assignments for one-clause-deleted subformulas.}
Assume that a clause $\overline{x}_{\ihat,\jhat}\vee x_{\ihat,\jhat+1}$ is deleted. (For all other clauses, the correspondence between $\Phi$ and $\Psi$ is immediate.)
Then, the $\ihat$th implication chain breaks in two and we are left with 
\[x_{1,1}\Rightarrow\ldots\Rightarrow x_{\ihat,\jhat}\quad\text{ and }\quad x_{\ihat,\jhat+1}\Rightarrow\ldots\Rightarrow x_{1,a+b}.\]

Consider the four (partial) assignments
\begin{align*}
\beta_0\colon&\hspace{1em}x_{\ihat,j}\mapsto 0\text{ for all }j,\\
\beta_0'\colon&\begin{cases}
x_{\ihat,j}\mapsto 0\text{ for }j\neq \jhat,\\
x_{\ihat,\jhat}\mapsto 1,\\
\end{cases}\\
\beta_1\colon&\hspace{1em}x_{\ihat,j}\mapsto 1\text{ for all }j,\\
\beta_1'\colon&\begin{cases}
x_{\ihat,j}\mapsto 1\text{ for }j\neq \jhat+1,\\
x_{\ihat,\jhat+1}\mapsto 0.\\
\end{cases}
\end{align*}
  
As already seen, the two assignments $\beta_0$ and $\beta_1$ correspond to the possible assignments for $x_i$ in $\Phi$. The option of $\beta_0'$ and $\beta_1'$, however, allows us to freely switch the assignment to one variable, either $x_{\ihat,\jhat}$ or $x_{\ihat,\jhat+1}$. This means that the clause where this variable occurs can always be satisfied; which is tantamount to deleting this clause. For the remaining clauses, we use the assignment from \cref{sec:mureduction}.

\subsubsection{\ThreeOccTwoOrThreeCNF to \EThreeCNF}

With the following construction, we gain the property that every clause contains exactly 3 literals (instead of either 2 or 3), but lose the property that every variable occurs at most three times and every literal at most twice.
\paragraph*{Construction.}
Let $\Phi$ be a given formula in \ThreeOccTwoOrThreeCNF. We construct $\Psi$ in the following way: Clauses with exactly three literals remain unchanged. A clause $C_i=(\ell_{i,1}\vee \ell_{i,2})$ with two literals is replaced by $\widetilde{C}_i=(\ell_{i,1}\vee \ell_{i,2}\vee y_i)\wedge(\ell_{i,1}\vee \ell_{i,2}\vee \overline{y}_i)$, with a new variable $y_i$. A clause $C_i=(\ell_{i,1})$ with only one literal is replaced by 
\[\widetilde{C}_i=(\ell_{i,1}\vee y_i\vee z_i)\wedge(\ell_{i,1}\vee \overline{y}_i\vee z_i)\wedge(\ell_{i,1}\vee y_i\vee \overline{z}_i)\wedge(\ell_{i,1}\vee \overline{y}_i\vee \overline{z}_i),\] with new variables $y_i$ and $z_i$. 
\paragraph*{Correctness.}
By assigning the right values to $y_i$ and $z_i$, respectively, we satisfy any of the two clauses (any three of the four clauses, respectively) of $\widetilde{C}_i$, leaving one that simplifies to the original $C_i$. 
\paragraph*{Easy-to-compute satisfying assignments for one-clause-deleted subformulas.}
If a clause of $\widetilde{C}_i$ is deleted, we can again use the assignment to $y_i$ and $z_i$ to satisfy the remaining ones; thus virtually deleting the whole of $\widetilde{C}_i$.

\section{Full Proof of \texorpdfstring{\cref{oldlem:color-e}}{Lemma~\ref{oldlem:color-e}}}\label{app:oldlem-color-e}

For convenience, we restate \cref{oldlem:color-e} before giving its proof.  
\setcounterref{lemmaduplicate}{oldlem:color-e}
\addtocounter{lemmaduplicate}{-1}
\begin{lemmaduplicate}
  There is a polynomial-time many-one reduction $g$ from \EThreeSat to \ThreeCol
  and a polynomial-time computable function $\opt$ such that, for every \EThreeCNF-formula
  $\formula$ and for every edge $e$ in $g(\formula)$,
  $\opt(g(\formula) - e)$ is an optimal coloring of $g(\formula) - e$.
\end{lemmaduplicate}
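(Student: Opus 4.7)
The plan is to follow the hint verbatim: set $g(\formula) = h(f(\formula)) - \{v_\textnormal{c}, v_\textnormal{s}\}$, where $f$ is the preprocessing reduction of \cref{thm:3sat} (whose polynomial-time computable $\sat$ provides a satisfying assignment for every one-clause-deleted subformula), and $h$ is the Cai--Meyer reduction \cite{cai-mey:j:dp} from \MinimalThreeUnSat to \VertexMinimalThreeUnCol, which is also known to be a reduction from \EThreeSat to \ThreeCol. My first task is to check that deleting the single edge $\{v_\textnormal{c}, v_\textnormal{s}\}$ does not change the 3-colorability status: for unsatisfiable $\formula$, the Cai--Meyer graph $h(f(\formula))$ is 4-chromatic with the entire ``$v_\textnormal{s}$-fork'' of rigid coloring constraints surviving even without that single edge; for satisfiable $\formula$, $h(f(\formula))$ is already 3-colorable and so is its edge-deleted subgraph. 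Hence $g$ is a valid many-one reduction from \EThreeSat to \ThreeCol.

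Next I would construct $\opt$ by a case analysis on the location of the deleted edge $e$ in $g(\formula)$. The vertex set of $g(\formula)$ decomposes, in the Cai--Meyer blueprint, into variable-setting gadgets, clause gadgets, connector edges between them, and the distinguished vertices $v_\textnormal{c}, v_\textnormal{s}$; the preprocessing introduced by \cref{sec:satselfreduction} adds only constantly many further vertices per original clause with a transparent structure. For each class of edge I would exhibit a concrete 3-coloring of $g(\formula) - e$ using the following recipe: identify a clause $\clause$ in $f(\formula)$ whose ``locking'' role in the Cai--Meyer graph is released by the removal of $e$, call $\sat(f(\formula) - \clause)$ to obtain an assignment $\alpha$, and transport $\alpha$ through the standard Cai--Meyer colouring scheme, making a bounded local correction around $e$, $v_\textnormal{c}$, and $v_\textnormal{s}$. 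For the few edges that are not associated with any clause in this way (chiefly the edges of the fork through $v_\textnormal{c}, v_\textnormal{s}$ and the implication-chain edges from the $f$-preprocessing), I would write down an explicit 3-coloring by inspection; there are only constantly many such ``structural'' edge types.

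The main obstacle is the case analysis for edges that lie in the surgery region around $v_\textnormal{c}$ and $v_\textnormal{s}$, because the removal of $\{v_\textnormal{c}, v_\textnormal{s}\}$ has already perturbed the clean symmetry of the original Cai--Meyer construction: further deleting an edge there risks either destroying 3-colorability on the satisfiable side or, at the other extreme, killing all odd cycles and thus accidentally making the graph 2-colorable. I would therefore conclude with the optimality verification, showing that for every edge $e$ the graph $g(\formula) - e$ still contains at least one odd cycle (easy to exhibit from the triangles inside each variable and clause gadget, none of which is destroyed by removing just one edge together with the fixed edge $\{v_\textnormal{c}, v_\textnormal{s}\}$), so $\chi(g(\formula) - e) \ge 3$ and the 3-colorings produced above are indeed optimal. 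Since $f$, $h$, $\sat$, and the bounded local adjustments are all polynomial-time computable, $\opt$ is polynomial-time computable as required.
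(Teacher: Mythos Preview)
Your construction and overall strategy match the paper's exactly: set $g(\formula) = h(f(\formula)) - \{v_\textnormal{c}, v_\textnormal{s}\}$ and, for each deleted edge $e$, use $\sat$ on a suitable one-clause-deleted subformula to obtain a 3-coloring of $g(\formula) - e$, with surviving triangles witnessing optimality. Two points in your write-up need correction, however.

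First, the preprocessing $f$ acts on formulas, not on graphs: $h$ is applied to the \ThreeCNF-formula $f(\formula)$, so the resulting graph has \emph{only} the Cai--Meyer edge types (variable edges $\{x_i,\overline{x}_i\}$, $v_\textnormal{c}$--literal edges, and clause-gadget edges). There are no ``implication-chain edges'' or extra preprocessing vertices in the graph. More importantly, the edges you single out as handleable ``by inspection''---the variable edges $\{x_i,\overline{x}_i\}$ and the edges $\{v_\textnormal{c},\ell_i\}$---cannot be handled that way: when $\formula$ is unsatisfiable, $g(\formula)$ minus one such edge is 3-colorable only via $\sat$. The paper treats them with the same recipe as the clause-gadget edges: pick a clause $C$ of $f(\formula)$ in which the affected literal occurs positively (such a $C$ always exists by the construction of $f$), 3-color $g(\formula)-C$ from $\sat(f(\formula)-C)$, then locally adjust the literal's color and extend across the nine $C$-vertices.

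Second, your argument that deleting $\{v_\textnormal{c},v_\textnormal{s}\}$ preserves non-3-colorability (``rigid coloring constraints surviving'') is too vague to stand as written. The paper supplies the concrete step: given a 3-coloring of $h(f(\formula))-\{v_\textnormal{c},v_\textnormal{s}\}$ in which $v_\textnormal{c}$ and $v_\textnormal{s}$ receive the same color $\T$, recolor $v_\textnormal{c}$ to $\C$ and every literal vertex that was colored $\C$ to $\T$; one checks directly that this is a 3-coloring of $h(f(\formula))$ itself, so 3-colorability is unchanged by the edge removal.
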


\begin{proof}
  Given a \EThreeCNF-formula $\formula$, let $g(\formula) = h(f(\formula)) - \{v_\textnormal{c},v_\textnormal{s}\}$,
  where $f$ is the reduction
  from \cref{thm:3sat} and $h$ is the reduction from \MinimalThreeUnSat
  to \VertexMinimalThreeUnCol by Cai and Meyer~\cite{cai-mey:j:dp} described below.
  We will show that $g$
  reduces \EThreeSat to \ThreeCol
  and that there is a polynomial-time computable function $\opt$
  such that, for every \EThreeCNF-formula
  $\formula$ and for every edge $e$ in $g(\formula)$,
  $\opt(g(\formula) - e)$ is an optimal coloring of $g(\formula) - e$.
  
  For completeness, we briefly describe the reduction $h$ from
  \MinimalThreeUnSat
  to \VertexMinimalThreeUnCol~\cite{cai-mey:j:dp} 
  (also excellently explained by Rothe and Riege~\cite{rot-rie:j:completeness}).
  Let $\formula$ be a \EThreeCNF-formula with variables $\{x_1,\ldots,x_n\}$ 
  and clauses $\{c_1,\ldots,c_m\}$. 
  The graph $h(\formula)$ is defined as follows; see \cref{fig:caimeyertotal}. 
  First, we create two vertices $v_\textnormal{c}$ and $v_\textnormal{s}$ connected by an edge.
  Then, for each variable $x_i$, we create two vertices $x_i$ and
  $\overline{x}_i$ and connect them to each other and each of them to $v_\textnormal{c}$.
  For each clause ${\clause}_k=\ell_{k1}\vee \ell_{k2}\vee\ell_{k3}$ of $\formula$, 
  we create nine new vertices, namely a triangle $t_{k1}, t_{k2}, t_{k3}$ 
  and a pair $a_{ki},b_{ki}$ for each literal $\ell_{ki}$, 
  where $t_{ki}$ is connected to $b_{ki}$, $a_{ki}$ is connected
  to $b_{ki}$, and both $a_{ki}$ and $b_{ki}$ are connected to $v_\textnormal{s}$;
  if and only if the literal $\ell\in\{x_j,\overline{x}_j\}$ appears as the $i$th literal in
  ${\clause}_k$, there is an edge from $\ell$ to $a_{ki}$.
  
  \begin{figure}
    \begin{center}
      \begin{tikzpicture}[x=0.8cm,y=0.8cm,vertex/.style={draw,circle, inner sep=.0pt, minimum size=0.53cm},font=\scriptsize]
        \node[vertex] (vc) at (6.5,2.8) {$v_\textnormal{c}$};
        \node[vertex] (vs) at (6.5,-4) {$v_\textnormal{s}$};
        \node[vertex] (x_1) at (1.2*3-1+1.2*1+2.75-5.5,1.3) {$x_1$};
        \draw (x_1) edge [in=185, out=28] (vc);
        \node[vertex] (nx_1) at (1.2*3-1+1.2*1+2.75-4.5,1.3) {$\overline{x}_1$};
        \draw (nx_1) edge [in=195, out=25] (vc);
        \draw (nx_1) -- (x_1);
        \node at (1.2*3-1+1.2*1+2.75-3.25,1.3) {$\cdots$};
        \node[vertex] (x_i) at (1.2*3-1+1.2*1+2.75-2.25,1.3) {$x_i$};
        \draw (x_i) edge [in=210, out=40] (vc);
        \node[vertex] (nx_i) at (1.2*3-1+1.2*1+2.75-1.25,1.3) {$\overline{x}_i$};
        \draw (nx_i) edge [in=220, out=60] (vc);
        \draw (nx_i) -- (x_i);
        \node at (1.2*3-1+1.2*1+2.75,1.3) {$\cdots$};
        \node[vertex] (x_j) at (1.2*3-1+1.2*1+2.75+1.25,1.3) {$x_j$};
        \draw (x_j) edge [in=-40, out=120] (vc);
        \node[vertex] (nx_j) at (1.2*3-1+1.2*1+2.75+2.25,1.3) {$\overline{x}_j$};
        \draw (nx_j) edge [in=-30, out=140] (vc);
        \draw (nx_j) -- (x_j);
        \node at (1.2*3-1+1.2*1+2.75+3.25,1.3) {$\cdots$};
        \node[vertex] (x_n) at (1.2*3-1+1.2*1+2.75+4.5,1.3) {$x_n$};
        \draw (x_n) edge [in=-15, out=155] (vc);
        \node[vertex] (nx_n) at (1.2*3-1+1.2*1+2.75+5.5,1.3) {$\overline{x}_n$};
        \draw (nx_n) edge [in=-5, out=152] (vc);
        \draw (nx_n) -- (x_n);            
        \node[vertex] (a_11) at (2.4*1-3,-2) {$a_{11}$};
        \draw (a_11) edge[in=175,out=-30] (vs);
        \node[vertex] (b_11) at (2.4*1-2,-2) {$b_{11}$};
        \draw (b_11) edge[in=166,out=-30]  (vs);
        \draw (b_11) -- (a_11);
        \node[vertex] (a_21) at (2.4*2-3,-2) {$a_{12}$};
        \draw (a_21) edge[in=157,out=-30] (vs);
        \node[vertex] (b_21) at (2.4*2-2,-2) {$b_{12}$};
        \draw (b_21) edge[in=148,out=-30]  (vs);
        \draw (b_21) -- (a_21);
        \node[vertex] (a_31) at (2.4*3-3,-2) {$a_{13}$};
        \draw (a_31) edge[in=139,out=-50] (vs);
        \node[vertex] (b_31) at (2.4*3-2,-2) {$b_{13}$};
        \draw (b_31) edge[in=130,out=-60]  (vs);
        \draw (b_31) -- (a_31);
        \node[vertex] (a_12) at (2.4*1 +5.5,-2) {$a_{m1}$};
        \draw (a_12) edge[in=50,out=180+60] (vs);
        \node[vertex] (b_12) at (2.4*1 +6.5,-2) {$b_{m1}$};
        \draw (b_12) edge[in=41,out=180+50] (vs);
        \draw (b_12) -- (a_12);
        \node[vertex] (a_22) at (2.4*2 +5.5,-2) {$a_{m2}$};
        \draw (a_22) edge[in=32,out=180+30] (vs);
        \node[vertex] (b_22) at (2.4*2 +6.5,-2) {$b_{m2}$};
        \draw (b_22) edge[in=23,out=180+30] (vs);
        \draw (b_22) -- (a_22);
        \node[vertex] (a_32) at (2.4*3 +5.5,-2) {$a_{m3}$};
        \draw (a_32) edge[in=14,out=180+30] (vs);
        \node[vertex] (b_32) at (2.4*3 +6.5,-2) {$b_{m3}$};
        \draw (b_32) edge[in=5,out=180+30] (vs);
        \draw (b_32) -- (a_32);
        \node[vertex] (t11) at (3,-0.25) {$t_{11}$};
        \draw (t11)--(b_11);
        \node[vertex] (t12) at (3.5,-1.116) {$t_{12}$};
        \draw (t12)--(b_21);
        \draw (t12)--(t11);
        \node[vertex] (t13) at (4,-0.25) {$t_{13}$};
        \draw (t13)--(b_31);
        \draw (t12)--(t13);
        \draw (t13)--(t11);
        \node[vertex] (t21) at (11,-0.25) {$t_{m1}$};
        \draw (t21)--(b_12);
        \node[vertex] (t22) at (11.5,-1.116) {$t_{m2}$};
        \draw (t22)--(b_22);
        \draw (t22)--(t21);
        \node[vertex] (t23) at (12,-0.25) {$t_{m3}$};
        \draw (t23)--(b_32);
        \draw (t22)--(t23);
        \draw (t23)--(t21);
        \node at (1.2*3-1+1.2*1+2.75,-1.2) {$\cdots$};
        \draw (x_1)--(a_11);
        \draw (x_i) edge [in=80, out=190](a_21);
        \draw (x_j)--(a_12);
        \draw (x_n)edge [in=90, out=-120](a_22);
        \draw (nx_j)edge[bend right=10](a_31);
        \draw (nx_n)edge [in=90, out=-60](a_32);
        \draw[dashed, rounded corners] (-1.2, -2.5) rectangle (5.8, .3) {};
        \draw[dashed, rounded corners] (7.3, -2.5) rectangle (14.3, .3) {};
        \node at (-.85,0) {${\clause}_1$};
        \node at (13.9,0) {${\clause}_m$};
      \end{tikzpicture}
    \end{center}
    \caption{The graph $h(\formula) - \{v_\textnormal{c},v_\textnormal{s}\}$ for a \EThreeCNF-formula with 
      ${\clause}_1 = x_1 \vee x_i \vee \overline{x}_j$ and 
      ${\clause}_m = x_j \vee x_n \vee \overline{x}_n$.}\label{fig:caimeyertotal}
  \end{figure}
  \medskip
  
  We first show that $g$ is a reduction from \EThreeSat to \ThreeCol.
  Cai and Meyer~\cite[Lemma 2.2]{cai-mey:j:dp} show that
  $h$ is a reduction from \EThreeSat to \ThreeCol.
  This implies that
  for every \EThreeCNF-formula $\formula$,
  $\formula$ is satisfiable if and only if $h(f(\formula))$ is 3-colorable, so it
  suffices to show that if $h(f(\formula)) - \{v_\textnormal{c},v_\textnormal{s}\}$ is 3-colorable, then
  so is $h(f(\formula))$. Consider a 3-coloring of $h(f(\formula)) - \{v_\textnormal{c},v_\textnormal{s}\}$
  such that $v_\textnormal{c}$ and $v_\textnormal{s}$ get the same color. Following the original proof~\cite{cai-mey:j:dp},
  we call the colors $\T$, $\F$, and $\C$. Assume that $v_\textnormal{c}$ and $v_\textnormal{s}$ are colored $\T$.
  Now change the color of $v_\textnormal{c}$ to $\C$ and change the color of every literal vertex originally
  colored $\C$ to $\T$. It is easy to check that this new coloring is a 3-coloring
  of $h(f(\formula))$.\footnote{Note that this also shows that deleting the
  edge $\{v_\textnormal{c},v_\textnormal{s}\}$ is crucial for the lemma and that
  the original construction~\cite{cai-mey:j:dp} does not work for deleting edges.}
  
  \medskip
  
  Let $e$ be an edge in $g(\formula)$. We need to show that there is a polynomial-time
  computable optimal coloring of $g(\formula) - e$. We show that there is a polynomial-time computable 3-coloring. (This is optimal because $g(\formula) - e$ is not 2-colorable since it contains triangles.)
  
  Let ${\clause}$ be a clause in $f(\formula)$.
  Let $\alpha$
  be a polynomial-time computable assignment for $f(\formula) - {\clause}$.
  From this assignment, we can compute in polynomial time a 3-coloring of 
  $g(\formula) - {\clause}$, i.e., $g(\formula)$ minus the nine clause-vertices representing ${\clause}$,
  in such a way that the literal-vertices are colored $\T$ or $\F$ according to $\alpha$,
  $v_\textnormal{c}$ is colored $\C$, and $v_\textnormal{s}$ is colored $\T$.
  
  \begin{enumerate}
  \item If $e = \{x_i,\overline{x}_i\}$, 
  let ${\clause}$ be a clause in $f(\formula)$ such that $x_i$ occurs
  positively in ${\clause}$ (note that it follows from
  the definition of $f$  that every literal appears positively in at
  least one clause of $f(\formula)$).
  Color $g(\formula) - {\clause}$ as explained above.
  If $x_i$ is colored $\F$, change its color to $\T$. This is still a
  3-coloring of $g(\formula) - {\clause}$, and since $x_i$ occurs positively in ${\clause}$, we can extend
  this coloring to a 3-coloring of $g(\formula)$.
  
  \item If $e = \{v_\textnormal{c},\ell_i\}$, where $\ell_i \in \{x_i, \overline{x}_i\}$, 
  let ${\clause}$ be a clause in $f(\formula)$ such that $\ell_i$ occurs
  positively in ${\clause}$.  Color $g(\formula) - {\clause}$ as explained above.
  If $\ell_i$ is colored $\T$, then we can extend the coloring to a 3-coloring
  of $g(\formula)$ in polynomial time.
  If $\ell_i$ is colored $\F$, change the color of $\ell_i$ to $\C$, and
  for every $a$-vertex connected to $\ell_i$, change its color from $\C$ to $\F$, and
  for every $b$-vertex connected to a changed $a$-vertex, change its color from
  $\F$ to $\C$. It is possible that because of this, the $b$-vertices in a clause are
  all colored $\C$, and the attached triangle cannot be colored.
  If that is the case, there is a $b$-vertex in the clause that is connected to 
  an $a$-vertex that is connected to a literal that is colored $\T$.
  Change the color of the $a$-vertex to $\C$ and that of the $b$-vertex to $\F$. Now
  we can color the triangle.  This results in a 3-coloring of
  $g(\formula) - {\clause}$, and it is easy to check that
  we can extend this coloring to a 3-coloring of $g(\formula)$.
  
  \item Let ${\clause}$ be a clause such that $e$ is connected to a clause vertex of ${\clause}$.
  Again, color $g(\formula) - {\clause}$ as above.
  If $\alpha$ satisfies $f(\formula)$, we can in polynomial time compute a 3-coloring of $g(\formula)$.
  So suppose that $\alpha$ does not satisfy $f(\formula)$.
  Then all literal-vertices connected to a clause-vertex of ${\clause}$ are colored $\F$.
  When we try to extend this coloring, all $a$-vertices in the clause must
  be colored $\C$ and all $b$-vertices must be colored $\F$, which means that we
  cannot color the triangle with 3 colors.
  If $e$ is one of the triangle edges, we can color the triangle-vertices
  $\T$, $\C$, and $\T$. If $e$ connects a $b$-vertex to a $t$-vertex, we can color that
  $t$-vertex $\F$, and the other $t$-vertices $\T$ and $\C$.
  For the remaining cases,
  we show that we can change the color of one of the $b$-vertices, which again allows 
  us to color the triangle.
  If $e$ connects a literal to an $a$-vertex, we can color the $a$-vertex $\F$
  and the connecting $b$-vertex $\C$.
  If $e$ connects an $a$-vertex to a $b$-vertex, we can color the $b$-vertex $\C$.
  If $e$ connects $v_\textnormal{s}$ to an $a$-vertex, we can color the $a$-vertex $\T$
  and the connecting $b$-vertex $\C$.
  If $e$ connects $v_\textnormal{s}$ to a $b$-vertex, we can color the $b$-vertex $\T$.
  \end{enumerate}
  
  This completes the proof of \cref{oldlem:color-e}. 
  We now
  explain why $g$ also fulfills the requirements of \cref{lem:color-v}.
  Let $v$ be a vertex in $g(\formula)$, and let $e$ be an edge incident
  with $v$ (such an edge always exists since $g(\formula)$ does not contain isolated vertices).
  Then $\opt(g(\formula) - e)$ is a 3-coloring. 
  This gives us a 3-coloring of $g(\formula)-v$, which is optimal 
  since $g(\formula) - v$ does not have a 2-coloring.
\end{proof}

\section{Additional Explanations for the Proof of \texorpdfstring{\cref{thm:colorer}}{Theorem~\ref{thm:colorer}}}\label{app:moreaboutcolorer}

This appendix provides supplementing material for the proof of \cref{thm:colorer}. 
On the one hand, we give \cref{fig:easyfigure}, which illustrates how \cref{alg:maincolorer} can optimally color graphs that are not universal-edged using only two queries.
On the other hand, we prove the correctness and polynomial-time complexity of \textsc{Subcol}, which is used to optimally color universal-edged graphs without any queries, in \cref{lem:subcolorer} and exemplify the used construction in \cref{fig:constructionexample}. 

\begin{figure}
  \begin{center}
    \begin{subfigure}[c]{0.32\textwidth}
      \begin{center}
        \begin{tikzpicture}[xscale=.8,yscale=.8]
          \tikzset{every node/.append style={minimum size=0.43cm, draw,circle,font=\normalfont,inner sep=0.05cm}}%
          \coordinate (u) at (-1,0);
          \coordinate (v) at (1,0);
          \coordinate (x) at (0,2);
          \draw[thick] (u) -- (v);
          \node[fill=gray!50,thin] at (u) {$u$};
          \node[fill=gray!0,thin] at (v) {$v$};
          \node[fill=gray!50,thin] at (x) {$x$};
          \draw[dashed] (0,.84) ellipse (2 and 2);
          \node[draw=none] at (-2,2.7) {};
        \end{tikzpicture}
        \captionsetup{width=.6\linewidth}
        \subcaption{An arbitrary $k$-color\-ing of~$G$.}
      \end{center}
    \end{subfigure}
    \begin{subfigure}[c]{0.32\textwidth}
      \begin{center}
        \begin{tikzpicture}[xscale=.8,yscale=.8]
          \tikzset{every node/.append style={minimum size=0.43cm, draw,circle,font=\normalfont,inner sep=0.05cm}}%
          \coordinate (u) at (-1,0);
          \coordinate (v) at (1,0);
          \coordinate (x) at (0,2);
          \draw[thick] (u) -- (v);
          \draw[thick] (u) -- (x) node[midway,left,draw=none,xshift=+.1em,yshift=.6ex] {$\{u,x\}$};
          \node[fill=gray!50,thin] at (u) {$u$};
          \node[fill=gray!0,thin] at (v) {$v$};
          \node[fill=gray!50,thin] at (x) {$x$};
          \draw[dashed] (0,.84) ellipse (2 and 2);
          \node[draw=none] at (-2,2.7) {};
        \end{tikzpicture}
        \captionsetup{width=.6\linewidth}
        \subcaption{For $G\add\{u,x\}$, this is no longer a $k$-coloring.}
      \end{center}
    \end{subfigure}
    \begin{subfigure}[c]{0.32\textwidth}
      \begin{center}
        \begin{tikzpicture}[xscale=.8,yscale=.8]
          \tikzset{every node/.append style={minimum size=0.43cm, draw,circle,font=\normalfont,inner sep=0.05cm}}%
          \coordinate (u) at (-1,0);
          \coordinate (v) at (1,0);
          \coordinate (x) at (0,2);
          \draw[thick] (u) -- (v);
          \draw[thick] (v) -- (x) node[midway,right,draw=none,xshift=-.1em,yshift=.6ex] {$\{v,x\}$};
          \node[fill=gray!50,thin] at (u) {$u$};
          \node[fill=gray!0,thin] at (v) {$v$};
          \node[fill=gray!50,thin] at (x) {$x$};
          \draw[dashed] (0,.84) ellipse (2 and 2);
          \node[draw=none] at (-2,2.7) {};
        \end{tikzpicture}
        \captionsetup{width=.6\linewidth}
        \subcaption{However, it remains a $k$-coloring for  $G\add\{v,x\}$.}
      \end{center}
    \end{subfigure}
  \end{center}
  \caption{Any $k$-coloring of $G$ is also a $k$-coloring for  $G\add\{u,x\}$ or $G\add\{v,x\}$ or both. If a coloring is optimal for $G$, then it is also optimal for $G\add\{u,x\}$ or $G\add\{v,x\}$.
    The figure depicts only the induced subgraphs of $\{u,v,x\}$.}\label{fig:easyfigure}
\end{figure}

\begin{figure}
  \begin{center}
    \begin{tikzpicture}[xscale=1.2,yscale=1.2]
      \tikzset{xscale=.9,yscale=0.75,every node/.append style={minimum size=0.43cm, draw,circle,font=\normalfont,inner sep=0.05cm}}%
      \coordinate (l) at (-4,0);
      \coordinate (r) at (4,0);
      \coordinate (l1) at (-2,1.5);
      \coordinate (l2) at (-2,2.5);
      \coordinate (l3) at (-2,3.5);
      \coordinate (l4) at (-2,4.5);
      \coordinate (r1) at (2,2);
      \coordinate (r2) at (2,3);
      \coordinate (r3) at (2,4);
      \coordinate (m1) at (-1.4,-1.9);
      \coordinate (m2) at (0,-1.9);
      \coordinate (m3) at (1.4,-1.9);
      \draw[thick] (l) -- (r) node[midway,below,xshift=-7ex,yshift=6.5ex,draw=none] {$\phantom{e=}\{\ell,r\}$};
      \draw[thick] (l) -- (l1);
      \draw[thick] (l) -- (l2);
      \draw[thick] (l) -- (l3);
      \draw[thick] (l) -- (l4);
      \draw[thick] (r) -- (r1);
      \draw[thick] (r) -- (r2);
      \draw[thick] (r) -- (r3);
      \draw[thick] (l1) -- (r2);
      \draw[thick] (l3) -- (r3);
      \draw[thick] (l4) -- (r1);
      \draw[thick] (l1) -- (r3);
      \draw[thick] (l) -- (m1) -- (r);
      \draw[thick] (l) -- (m2) -- (r);
      \draw[thick] (l) -- (m3) -- (r);
      \draw[thick] (l3) -- (m3);
      \draw[thick] (r1) -- (m2);
      \draw[thick] (l1) -- (m1);
      \draw[thick] (m2) -- (m3) node[midway,below,draw=none,yshift=.3ex] {$d$};
      \node[fill=gray!0,thin] at (l) {$\ell$};
      \node[fill=gray!20,thin] at (r) {$r$};
      \node[fill=gray!20,thin] at (l1) {};
      \node[fill=gray!20,thin] at (l2) {};
      \node[fill=gray!20,thin] at (l3) {};
      \node[fill=gray!20,thin] at (l4) {};
      \node[fill=gray!0,thin] at (r1) {};
      \node[fill=gray!0,thin] at (r2) {};
      \node[fill=gray!0,thin] at (r3) {};
      \node[fill=gray!70,thin] at (m1) {};
      \node[fill=gray!70,thin] at (m2) {};
      \node[fill=gray!70,thin] at (m3) {};
      \draw[dashed] (-2,3) ellipse (.8 and 2.5);
      \node[draw=none] at (-3,5) {$L$};
      \draw[dashed] (2,3) ellipse (.7 and 2);
      \node[draw=none] at (3,4) {$R$};
      \draw[dashed] (0,-1.9) ellipse (2.2 and .9);
      \node[draw=none] at (-2.6,-2) {$M$};
    \end{tikzpicture}
  \end{center}
  \caption{An example of the construction that we use in \textsc{Subcol} (\cref{alg:subcolorer}) for a $k$-colorable graph $G$, exploiting the fact that $G$ is known to be universal-edged. In the example, we have $k=4$. In general, the graph $G$ is $k$-colorable if and only if the induced subgraph $G[M]$ is $(k-2)$-colorable. The subgraphs $G[L]$ and $G[R]$ are independent sets. 
    The following relations
    hold in general as well:
    \[\protect\begin{aligned}
    L={}&N(\ell)\setminus N[r]=V\setminus N[r],&
    M={}&N(\ell)\cap N(r),\\
    R={}&
    N(r)\setminus N[\ell]=V\setminus N[\ell],\text{ and }&V={}&N[\ell]\cup N[r]=L\cup M\cup R\cup\{\ell,r\}.
    \protect\end{aligned}.\]
    In the example, only edge $d$ prevents $G[M]$ from being $1$-colorable and thus $G$ from being $3$-colorable.
  }
  \label{fig:constructionexample}
\end{figure}

\begin{lemma}\label{lem:subcolorer}
  The subroutine \textsc{Subcol} (\cref{alg:subcolorer}) is correct and runs in polynomial time.
\end{lemma}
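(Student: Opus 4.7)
My plan is to prove Lemma~\ref{lem:subcolorer} by induction on $k$ (equivalently, on the recursion depth), establishing simultaneously the correctness of the returned value and a polynomial bound on the runtime. The load-bearing structural fact is that, because $G$ is universal-edged, any chosen edge $\{\ell,r\}$ yields a partition $V=L\cup M\cup R\cup\{\ell,r\}$ in which both $L$ and $R$ are independent sets.

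First I would verify this structural fact. The decomposition is immediate from the universal-edged property applied to $\{\ell,r\}$, which forces $V=N[\ell]\cup N[r]$ and hence $L=V\setminus N[r]$ and $R=V\setminus N[\ell]$. For independence of $L$, suppose toward a contradiction that $\{u,v\}\in E$ with $u,v\in L$; applying the universal-edged property now to the edge $\{u,v\}$ would require $r$ to be adjacent to $u$ or to $v$, contradicting $u,v\in L\subseteq V\setminus N[r]$. The argument for $R$ is symmetric. I would also record that $G[M]$ inherits the universal-edged property (any edge of $G[M]$ is universal in $G$, hence in $G[M]$), so the recursive call respects its precondition.

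Next I would handle the six return statements. Lines~\ref{line:returnOne}--\ref{line:returnFour} are base cases: a graph without edges is $1$-colorable; a graph with an edge is not $1$-colorable; a bipartite graph is $2$-colored by its bipartition; a non-bipartite graph is not $2$-colorable. For line~\ref{line:returnSix} I need to check that the proposed assignment is a valid $k$-coloring of $G$. Independence of $L\cup\{r\}$ follows from the structural fact together with $r\notin N(L)$, so color $k-1$ is safe there; likewise $k$ is safe on $R\cup\{\ell\}$; and colors $1,\ldots,k-2$ are safe on $M$ because $g$ is a $(k-2)$-coloring of $G[M]$. The three color palettes are pairwise disjoint, and $\ell$ and $r$ themselves receive different colors. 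For line~\ref{line:returnFive} I need the converse: if $G$ is $k$-colorable, then $G[M]$ is $(k-2)$-colorable. Indeed, in any $k$-coloring $f$ of $G$, the values $f(\ell)$ and $f(r)$ are distinct (since $\{\ell,r\}\in E$) and forbidden on every vertex of $M\subseteq N(\ell)\cap N(r)$, so $f$ uses at most $k-2$ colors on $M$.

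Polynomial runtime follows from a recursion-depth bound: each recursive call decreases $k$ by $2$ and strictly decreases the vertex count (by at least $2$, since $\ell,r\notin M$), so the depth is at most $\min(k,|V|)/2$; as \textsc{Colorer} only invokes \textsc{Subcol} with $k\le\chi(G)\le|V|$, this is polynomial, and the per-level work (neighborhood computations plus a BFS bipartiteness test) is clearly polynomial as well. The main difficulty, though a mild one, lies in proving the independence of $L$ and $R$ by applying the universal-edged property a second time, as this is the structural pivot that makes the three-block decomposition deliver the inductive step.
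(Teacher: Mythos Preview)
Your proposal is correct and follows essentially the same approach as the paper's proof: the same partition $V=L\cup M\cup R\cup\{\ell,r\}$ obtained from universality of $\{\ell,r\}$, the same independence argument for $L$ and $R$ via a second application of the universal-edged property to a hypothetical edge inside those sets, and the same induction on $k$ handling the six return statements. You additionally verify that $G[M]$ inherits the universal-edged property so that the recursive call meets its precondition---a point the paper's proof uses tacitly but does not spell out.
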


For convenience in referencing the lines of the algorithm \textsc{Subcol}, we reprint it here.

\setcounterref{algorithm}{alg:subcolorer}
\addtocounter{algorithm}{-1}
\begin{algorithm}[H]
  \caption{{}\textsc{Subcol}}
  \textbf{Input:} An undirected, universal-edged graph $G=(V,E)$ and a positive integer $k$.\\
  \textbf{Output:} A $k$-coloring $f$ for $G$ if there is one; NO if there is none.\\
  \textbf{Description:} Works by recursion over $k$, with $k=1$ and $k=2$ serving as the base cases.    \begin{algorithmic}[1]
    \If{$G$ has no edge}
    \State \textbf{return} the constant 1-coloring with $f(x)=1$ for all $x\in V$.
    \ElsIf{$k=1$}
    \State \textbf{return} NO.
    \EndIf
    \If{$G$ has bipartition $\{A,B\}$}\vspace*{-2ex}
    \State \textbf{return} the 2-coloring $f(x)=\begin{cases}1&\text{ for }x\in A,\text{ and}\\2&\text{ for }x\in B.\end{cases}$\vspace*{-2ex}
    \ElsIf{k=2}
    \State \textbf{return} NO.
    \EndIf
    \State Choose an arbitrary edge $\{\ell,r\}\in E$.
    \newlength{\maxwidthtwo}
    \settowidth{\maxwidthtwo}{$L$}
      \State $\makebox[\maxwidthtwo][r]{$L$}\gets N(\ell)\setminus N[r]$;\quad $\makebox[\maxwidthtwo][r]{$R$}\gets N(r)\setminus N[\ell]$;\quad $\makebox[\maxwidthtwo][r]{$M$}\gets N(\ell)\cap N(r)$
      \State $\makebox[\maxwidthtwo][r]{$g$}\gets \mathop{{}\textsc{Subcol}}(G[M],k-2)$
      \If{$g=\text{NO}$}
    \State \textbf{return} NO\vspace*{-3.5ex}
    \EndIf
    \State \textbf{return} the $k$-coloring $f(x)=
    \begin{cases}
    g(x)&\text{for $x\in M$,}\\
    k-1&\text{for $x\in L\cup\{r\}$, and}\\
    k&\text{for $x\in R\cup\{\ell\}$.}
    \end{cases}$
  \end{algorithmic}
\end{algorithm}

\begin{proof}[Proof of \cref{lem:subcolorer}]
  Note first that the input for {}\textsc{Subcol} is a pair $(G,k)$, where $G$ is a universal-edged graph and $k$ a positive integer.
  It is thus clear that {}\textsc{Subcol} runs in polynomial time: The recursion depth is $\lfloor(k-1)/2\rfloor$ and for the two base cases it is easy to check whether $G$ has edges and whether $G$ has a bipartition and, if there is any, find one in polynomial time.
  
  We now show that {}\textsc{Subcol} is correct by going through all six return statements.
  The first one in \cref{line:returnOne} is correct since the constant coloring is a $k$-coloring for any $k\in\N\setminus\{0\}$. With the second one in \cref{line:returnTwo}, the case $k=1$ is completely and correctly covered. Analogously, the third one in \cref{line:returnThree} is correct since a $2$-coloring is a $k$-coloring for any $k\in\N\setminus\{0,1\}$, and the case $k=2$ is correctly covered together with the fourth return statement in \cref{line:returnFour}.
  If none of the first four return statements of {}\textsc{Subcol} are executed, the graph $G$ has an edge and the choice of an edge $\{\ell,r\}\in E$ is possible.
  
  For the last two return statements in \cref{line:returnFive,line:returnSix}, we will prove the correctness by induction over $k$, with the above two cases $k=1$ and $k=2$ serving as the induction basis. We will rely on the properties of a partition of $G$ that we describe in what follows;
  see \cref{fig:constructionexample} for an illustrating example with a graph that is $k$-colorable for $k=4$ but not for $k=3$.
  Let $\{\ell,r\}$ be the edge of $G$ as chosen by the algorithm. The remaining vertices $V\setminus\{\ell,r\}$ are partitioned, depending on the way they are connected to $\{\ell,r\}$, into the three sets $L$, $R$, and $M$: $L$ contains the vertices adjacent to $l$ but not to $r$, $R$ contains the vertices connected to $r$ but not to $l$, and $M$ contains the vertices that are adjacent to both $l$ and $r$.
  Note that the sets $L$, $R$, and $M$ are disjoint. They cover $V\setminus\{\ell,r\}$ since every vertex is adjacent to $l$ or $r$ because $G$ is universal-edged and $\{\ell,r\}$ thus  universal.
  We now consider the case that NO is returned with the fifth return statement. This happens only if $g=\mathop{{}\textsc{Subcol}}(G[M],k-2)=\textrm{NO}$. Thus, $G[M]$ is not $(k-2)$-colorable by the induction hypothesis. We show that a $k$-coloring of $G$ yields a $(k-2$)-coloring of $G[M]$, thus proving by contradiction that $G$ is not $k$-colorable.
  Assume that there is a $k$-coloring of $G$.
  Due to the edge $\{\ell,r\}$, the two vertices $\ell$ and $r$ have two different colors out of the $k$ available ones.
  Since all vertices of $M$ are adjacent to both $\ell$ and $r$, the subgraph $G[M]$ is indeed colored by $f$ with the $k-2$ remaining colors.
  
  Finally, we consider the case where the last return statement in \cref{line:returnSix} is reached. 
  We need to prove that 
  the output $f(x)$  
  is a $k$-coloring of $G$. By the induction hypothesis, we know that $g$ is a $(k-2)$-coloring on $G[M]$ using the colors ${1,\ldots,k-2}$.
  The remaining vertices $L\cup\{r\}$ and $R\cup\{\ell\}$ are colored with $k-1$ and $k$, respectively. 
  Thus, it suffices to show that $G[L\cup\{r\}]$ and $G[R\cup\{\ell\}]$ are independent sets.
  Consider first $G[L\cup\{r\}]$. On the one hand, none of the vertices in $L$ are adjacent to $r$ by the definition of $L$. On the other hand, if there were $x,y\in L$ with $\{x,y\}\in E$, this would contradict the universality of $\{x,y\}$ for $r$. 
  Analogously, we see that $G[R\cup\{\ell\}]$ is an independent set, concluding the proof.
\end{proof}

\section{Analogue of \texorpdfstring{\cref{thm:removevPoly}}{Theorem~\ref{thm:removevPoly}}}\label{app:addePoly}

We prove the analogue of \cref{thm:removevPoly} 
for adding edges instead of deleting vertices.

\begin{theorem}\label{thm:addePoly}
  There is a polynomial-time algorithm that computes an optimal vertex cover for a graph from two optimal vertex covers for some one-edge-added supergraphs.
\end{theorem}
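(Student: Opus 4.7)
The plan is to mirror \cref{thm:removevPoly} by exploiting that for any non-edge $\{u,v\}$ of $G$ we have $\beta(G\cup\{u,v\})\in\{\beta(G),\beta(G)+1\}$, with the first case occurring precisely when some minimum vertex cover of $G$ contains $u$ or $v$. In this ``safe'' case, any optimal vertex cover returned by the oracle for $G\cup\{u,v\}$ is also a vertex cover of $G$ of size $\beta(G)$, hence optimal for $G$. The goal is therefore to select two non-edges guaranteed to contain at least one safe choice.

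I first dispose of the trivial cases: if $G$ has no edges, return $\emptyset$; if $G$ is complete, return $V\setminus\{v\}$ for an arbitrary $v$. Otherwise, let $U\subseteq V$ be the set of universal vertices of $G$ and consider $H = G[V\setminus U]$. If $H$ has no edges, then $G$ consists of a clique on $U$ joined completely to an independent set $V\setminus U$, and a short counting argument yields $\beta(G) = |U|$, so $U$ can be returned directly without queries.

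In the main case $H$ contains an edge $\{b,c\}$; since $b,c\notin U$, each has a non-neighbor in $G$. Pick non-neighbors $y$ of $b$ and $z$ of $c$, query the oracle for optimal vertex covers $C_1$ of $G\cup\{b,y\}$ and $C_2$ of $G\cup\{c,z\}$, and return the smaller of the two. Correctness hinges on the fact that $\{b,c\}$ is an edge of $G$: every minimum vertex cover of $G$ contains $b$ or $c$, and such a cover also covers $G\cup\{b,y\}$ (respectively $G\cup\{c,z\}$), forcing $|C_1| = \beta(G)$ (respectively $|C_2| = \beta(G)$). Either way $\min(|C_1|,|C_2|) = \beta(G)$, and since both $C_i$ are vertex covers of supergraphs of $G$, the smaller of the two is an optimal vertex cover of $G$.

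The main obstacle is that a naive choice of two non-edges may fail entirely: if $G$ is, for instance, a split graph in which a large independent set is contained in every maximum independent set, then every non-edge lies inside this set and no non-edge is safe. Peeling off the universal vertices first and anchoring the two queries at the endpoints of an edge $\{b,c\}$ inside $H$ is precisely what sidesteps this obstruction, since one of its endpoints must sit in some minimum vertex cover and each of them has an available non-neighbor to query against.
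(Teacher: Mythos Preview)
Your proof is correct and follows essentially the same approach as the paper: both arguments pick two adjacent non-universal vertices, add one new edge at each endpoint, query the oracle twice, and return the smaller cover; and both handle the residual case (every edge has a universal endpoint) by returning the set $U$ of universal vertices directly. Your presentation is slightly more careful in explicitly treating edgeless and complete graphs up front, but the underlying idea is identical.
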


\begin{proof}
  Observe first what can happen when an edge $e$ is added to a graph $G$ with an optimal vertex cover of size $k$. If one of its endpoints $v$ is part of any optimal vertex cover of $G$, then the optimal vertex cover of $G$ containing $v$ is also
  an optimal vertex cover for $G\add e$.
  Given any graph $G$, the algorithm picks any two non-universal vertices $v_1$ and $v_2$ that are adjacent. Since there is an edge between them, any given optimal vertex cover contains at least one of them. If edges are added to this vertex, the vertex cover of size $k$ thus remains optimal. Because $v_1$ and $v_2$ are non-universal, the algorithm can add to $G$ an edge $e_1$ that is incident to $v_1$ and an edge $e_2$ incident to $v_2$. Now the algorithm queries the oracle for two optimal vertex covers, one for $G\add e_1$ and one for $G\add e_2$. At least one of them has size $k$ (as opposed to $k+1$) and is thus optimal for $G$ as well. 
  If $G$ has the property that, for every pair of adjacent vertices $v_1$ and $v_2$, one of them is universal, then the set of all universal vertices constitutes an optimal vertex cover.
\end{proof}

\section{Full Proof of \texorpdfstring{\cref{thm:TriangleReduction}}{Theorem~\ref{thm:TriangleReduction}}}\label{app:TriangleReduction}

In this appendix, we provide the formal proof of \cref{thm:TriangleReduction}, after restating it for convenience.
\setcounterref{theoremduplicate}{thm:TriangleReduction}
\addtocounter{theoremduplicate}{-1}

\begin{theoremduplicate}
  There is a reduction $g$ from \ThreeSat to \DecVC such that, for every \ThreeCNF-formula $\formula$ and
  for every triangle $T$ in $g(\formula)$, there is a polynomial-time computable
  optimal vertex cover of $g(\formula) - T$.
\end{theoremduplicate}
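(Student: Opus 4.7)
The plan is to compose the preprocessing reduction $f$ from \cref{thm:3sat} with the textbook reduction $h$ from \EThreeSat to \DecVC (see \cite{gar-joh:b:int}), setting $g \coloneqq h \circ f$. Recall that $h$ sends a \EThreeCNF-formula $\Psi$ with variable set $\{x_1,\ldots,x_n\}$ and clauses $C_1,\ldots,C_m$ (where $C_k = \ell_{k,1} \vee \ell_{k,2} \vee \ell_{k,3}$) to the pair $(G_\Psi, n+2m)$, where $G_\Psi$ contains (i) a variable-gadget edge $\{x_i,\overline{x}_i\}$ for each $i$; (ii) a clause-gadget triangle $T_k$ on fresh vertices $u_{k,1},u_{k,2},u_{k,3}$ for each $k$; and (iii) a cross edge joining $u_{k,p}$ to the literal vertex corresponding to $\ell_{k,p}$ for every $k$ and $p\in\{1,2,3\}$. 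Since $f$ reduces \ThreeSat to \EThreeSat and $h$ reduces \EThreeSat to \DecVC, the composition $g$ is a valid reduction from \ThreeSat to \DecVC.

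Next, I would show that the only triangles in $g(\formula) = G_{f(\formula)}$ are the clause-gadget triangles $T_1,\ldots,T_m$. A short case analysis over how many variable vertices and clause-gadget vertices participate suffices: two variable vertices are adjacent only within their shared gadget edge, so a triangle with two variable vertices $x_i,\overline{x}_i$ would require some $u_{k,p}$ adjacent to both, contradicting that each clause-gadget vertex has a unique literal neighbor; dually, a triangle with two clause-gadget vertices from a common $T_k$ would require a single literal vertex connected to two distinct literals of $C_k$, contradicting the distinctness guaranteed by \EThreeCNF; and no triangle can span two different clause gadgets since those gadgets are vertex-disjoint and share no edges.

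For any such clause-gadget triangle $T = T_j$, I would construct an optimal vertex cover $V_j$ of $g(\formula) - T$ in polynomial time from the satisfying assignment $\alpha \coloneqq \sat(f(\formula) - C_j)$ supplied by \cref{thm:3sat}, as follows: for each variable $x_i$, place $x_i$ in $V_j$ if $\alpha(x_i)=1$ and otherwise place $\overline{x}_i$; for each remaining clause $C_k$ with $k\neq j$, pick any literal $\ell_{k,p}$ satisfied by $\alpha$ (one exists because $\alpha$ satisfies $f(\formula)-C_j$) and add the two triangle vertices $u_{k,q}$ with $q\neq p$. A routine check across the three edge types confirms coverage: variable-gadget edges are covered by the chosen literal vertices; the three edges of each surviving triangle and the two cross edges incident to the chosen $u_{k,q}$ are covered directly; and the single remaining cross edge per clause terminates at a satisfied literal vertex already in $V_j$. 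The cardinality $|V_j| = n + 2(m-1)$ matches the trivial lower bound of one vertex per variable-gadget edge plus two vertices per remaining clause triangle, so $V_j$ is optimal.

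The main---and quite modest---obstacle is the triangle-uniqueness analysis; once it is in place, the rest is a direct adaptation of the standard equivalence between satisfying assignments and vertex covers of size $n+2m$ to the situation where exactly one clause triangle is missing, with \cref{thm:3sat} providing precisely the polynomial-time certificate of satisfiability of $f(\formula)-C_j$ that makes $V_j$ polynomially computable.
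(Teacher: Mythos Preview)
Your proof is correct and follows the same approach as the paper: set $g = h \circ f$ with $f$ from \cref{thm:3sat} and $h$ the textbook reduction from \cite{gar-joh:b:int}, then turn the polynomial-time satisfying assignment of $f(\formula)-C$ into an optimal vertex cover of $g(\formula)-T$. The paper's proof simply asserts that every triangle of $g(\formula)$ corresponds to a clause of $f(\formula)$ and that satisfying assignments translate to optimal vertex covers, whereas you spell out both the triangle-uniqueness case analysis and the explicit construction of the cover together with its optimality; these are exactly the details the paper leaves implicit.
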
 
\begin{proof}
  Given a \EThreeCNF-formula $\formula$, let $g(\formula) = h(f(\formula))$, where $f$ is the reduction
  from \cref{thm:3sat} and $h$ is the standard reduction from 
  \EThreeSat to \DecVC~\cite{gar-joh:b:int}.
  Let $T$ be a  triangle in $g(\formula)$. Then $T$ corresponds to a clause ${\clause}$ in $f(\formula)$ and
  the satisfying assignments of $f(\formula) - {\clause}$ correspond to optimal vertex covers
  of $g(\formula) - T$, in a polynomially computable way. Hence, we can
  compute in polynomial time an optimal vertex cover of $g(\formula) - T$ from the polynomial-time
  computable satisfying assignment for $f(\formula) - {\clause}$.
\end{proof}

\end{appendix}


\begin{thebibliography}{10}

\bibitem{arc-ber-spe:reoptimizing-tsp}
Claudia Archetti, Luca Bertazzi, and Maria~Grazia Speranza.
\newblock Reoptimizing the traveling salesman problem.
\newblock \emph{Networks}, 42(3):154--159, 2003.

\bibitem{aus-esc-etal:reoptimization-tsp}
Giorgio Ausiello, Bruno Escoffier, Jérôme Monnot, and Vangelis Paschos.
\newblock Reoptimization of minimum and maximum traveling salesman’s tours.
\newblock In \emph{Proceedings of the 10th Scandinavian Workshop on Algorithm
  Theory ({SWAT}~2006)}, volume 4059 of \emph{Lecture Notes in Computer Science}, pages 196--207. Springer-Verlag, 2006.

\bibitem{boe-for-etal:reusing-optimal-solutions}
Hans-Joachim Böckenhauer, Luca Forlizzi, Juraj Hromkovi{\v{c}}, Joachim Kneis,
  Joachim Kupke, Guido Proietti, and Peter Widmayer.
\newblock Reusing optimal {TSP} solutions for locally modified input instances.
\newblock In \emph{Proceedings of the 4th {IFIP} International Conference on
  Theoretical Computer Science ({IFIP TCS}~2006)}, pages 251--270. Springer-Verlag,
  2006.

\bibitem{handbook-reoptimization}
Hans-Joachim Böckenhauer, Juraj Hromkovi{\v{c}}, and Dennis Komm.
\newblock Reoptimization of hard optimization problems.
\newblock In Teofilo~F. Gonzalez, editor, \emph{AAM Handbook of Approximation
  Algorithms and Metaheuristics}, volume~1, chapter~25, pages 427--454. CRC
  Press 2018, 2nd edition, 2018.

\bibitem{boe-hro-etal:hardness-of-reoptimization}
Hans-Joachim Böckenhauer, Juraj Hromkovi{\v{c}}, Tobias Mömke, and Peter
  Widmayer.
\newblock On the hardness of reoptimization.
\newblock In \emph{Proceedings of the 34th Conference on Current Trends in
  Theory and Practice of Computer Science ({SOFSEM}~2008)}, volume 4910 of {\em
  Lecture Notes in Computer Science}, pages 50--65. Springer-Verlag, 2008.

\bibitem{kle-kul:b:handbook-sat-minunsat}
Hans~K. Büning and Oliver Kullmann.
\newblock Minimal unsatisfiability and autarkies.
\newblock In Armin Biere, Marijn Heule, Hans van Maaren, and Toby Walsh,
  editors, \emph{Handbook of Satisfiability 2009}, pages 339--401. {IOS} Press,
  2009.

\bibitem{cai-mey:j:dp}
Jin-Yi Cai and Gabriele~E. Meyer.
\newblock Graph minimal uncolorability is {$\rm D^P$}-complete.
\newblock \emph{SIAM Journal on Computing}, 16(2):259--277, 1987.

\bibitem{dir:theorems-abstract-graphs}
Gabriel~A. Dirac.
\newblock Some theorems on abstract graphs.
\newblock \emph{Proceedings of the London Mathematical Society}, s3-2(1):69--81,
  1952.

\bibitem{fal-ogi:j:func-autored}
Piotr Faliszewski and Mitsunori Ogihara.
\newblock On the autoreducibility of functions.
\newblock \emph{Theory of Computing Systems}, 46(2):222--245, 2010.

\bibitem{gar-joh:b:int}
Michael Garey and David~S. Johnson.
\newblock \emph{Computers and Intractability: {A} Guide to the Theory of
  {NP}-Completeness}.
\newblock {W. H. Freeman and Company}, 1979.

\bibitem{har:graph-theory}
Frank Harary.
\newblock \emph{Graph Theory}.
\newblock Addison-Wesley, 1991.

\bibitem{hem-spa-vog:j:kemeny}
Edith Hemaspaandra, Holger Spakowski, and Jörg Vogel.
\newblock The complexity of {Kemeny} elections.
\newblock \emph{Theoretical Computer Science}, 349(3):382--391, 2005.

\bibitem{jor:thesis:entropy-and-stability}
Gwenaël Joret.
\newblock \emph{Entropy and Stability in Graphs}.
\newblock PhD thesis, Université Libre de Bruxelles, Faculté des Sciences,
  2008.

\bibitem{mey-pat:t:int}
Albert~R. Meyer and Mike Paterson.
\newblock With what frequency are apparently intractable problems difficult?
\newblock Technical Report MIT/LCS/TM-126, Laboratory for Computer Science,
  MIT, Cambridge, MA, 1979.

\bibitem{pap-wol:j:facets}
Christos~H. Papadimitriou and David Wolfe.
\newblock The complexity of facets resolved.
\newblock \emph{Journal of Computer and System Sciences}, 37(1):2--13, 1988.

\bibitem{rot-rie:j:completeness}
Jörg Rothe and Tobias Riege.
\newblock Completeness in the {B}oolean {H}ierarchy.
\newblock \emph{Journal of Universal Computer Science}, 12(5):551--578, 2006.

\bibitem{sch:c:self-reducible}
Claus-Peter Schnorr.
\newblock Optimal algorithms for self-reducible problems.
\newblock In \emph{Proceedings of the 3rd International Colloquium on Automata,
  Languages, and Programming}, pages 322--337. Edinburgh University Press,
  1976.

\bibitem{sch:forbidden}
Markus~W. Schäffter.
\newblock Scheduling with forbidden sets.
\newblock \emph{Discrete Applied Mathematics}, 72(1--2):155--166, 1997.

\bibitem{wag:j:bounded}
Karl Wagner.
\newblock Bounded query classes.
\newblock \emph{SIAM Journal on Computing}, 19(5):833--846, 1990.

\bibitem{wag:j:more-on-bh}
Klaus~W. Wagner.
\newblock More complicated questions about maxima and minima, and some closures
  of {NP}.
\newblock \emph{Theoretical Computer Science}, 51(1--2):53--80, 1987.

\bibitem{wes:criticality-graph-theory}
Walter Wessel.
\newblock Criticity with respect to properties and operations in graph theory.
\newblock In László~Lovász András~Hajnal and Vera~T. Sós, editors, {\em
  Finite and Infinite Sets. (6th Hungarian Combinatorial Colloquium, Eger,
  1981)}, volume~2 of \emph{Colloquia Mathematica Societatis Janos Bolyai},
  pages 829--837. North-Holland, 1984.

\end{thebibliography}
\end{document}